\newtheorem{thm}{Theorem}[section]
\newtheorem{cor}[thm]{Corollary}
\newtheorem{lem}[thm]{Lemma}
\theoremstyle{definition}
\newtheorem{rem}[thm]{Remark}
\newtheorem{prop}[thm]{Proposition}
\renewcommand{\epsilon}{\varepsilon}
\newcommand{\E}{\mathcal{E}}
\newcommand{\Eh}{\widehat{\mathcal{E}}}
\newcommand{\eh}{\widehat{E}}
\renewcommand{\phi}{\varphi}
\newcommand{\R}{\mathbb{R}}
\renewcommand{\S}{\mathbb{S}}
\newcommand{\be}{\begin{equation}}
\newcommand{\uR}{\underline{R}}
\newcommand{\uZ}{\underline{Z}}
\newcommand{\1}{{\ensuremath{\mathds{1}} }}
\def\tr{\mathop{\mathrm{tr}}\nolimits} % Spur
\begin{document}

\title{Binding of atoms in M\"uller theory}
% Force line breaks with \\

\author{Yukimi Goto}

\address{Department of Mathematics, Gakushuin University 
1-5-1 Mejiro Toshima-ku Tokyo 171-0031, Japan.}
\email{\tt yukimi@math.gakushuin.ac.jp}
\keywords{M\"uller functional, Binding of atoms, Many-electron system.}

\begin{abstract}
%{\bf Abstract.}
We give a necessary and sufficient condition for the existence of molecules in  M\"uller theory.
Furthermore, we show that if a system is stable in Born-Oppenheimer approximation, then the bound on the positive excess charge $ Z-N \le cZ^{1-\epsilon}$ follows.
\footnotetext{2010 {\it Mathematics Subject Classification\/}.81V55, 35Q40}
\footnotetext{Key words:M\"uller functional, Binding of atoms, Many-electron system}
\end{abstract}

\maketitle

%%%%%%%%%%%%%%%%%%%%%%%%%%%%%%%%%%%%%%

\section{Introduction}
We consider a molecule with $N>0$ electrons and $K$ nuclei.
We say that a self-adjoint operator $\gamma$ is an one-body density-matrix if $0 \le \gamma \le 1$ on $L^2(\R^3)$ and $\tr \gamma < +\infty$.
Then the M\"uller functional is defined by
\[
\mathcal{E}_{\uR}(\gamma)
= \tr\left[ \left( -\frac{1}{2} \Delta - V_{\uR} \right) \gamma \right]
+D[\rho_{\gamma}] -X(\gamma^{1/2}),
\]
where $D[\rho_\gamma]$ is the direct part of Coulomb energy defined by
\[
D[\rho_{\gamma}] = D(\rho_{\gamma}, \rho_{\gamma}) = \frac{1}{2}\iint_{\R^3\times \R^3}
\frac{\rho_{\gamma}(x) \rho_{\gamma}(y)}{|x-y|}dxdy
\]
and the M\"uller exchange energy is defined by
\[
X(\gamma^{1/2}) = \frac{1}{2}\iint_{\R^3\times \R^3} \frac{|\gamma^{1/2}(x, y)|^2}{|x-y|}dxdy.
\]
Here $\gamma^{1/2}(x, y) = \sum_{i\ge1} \lambda_i^{1/2} \phi_i(x) \phi^*_i(y)$, with $\gamma \phi_i = \lambda_i \phi_i$, and $\rho_{\gamma} (x)= \gamma(x, x)$ is the one-particle electron density.
Our potential is
\[ V_{\uR}(x) = \sum_{i=1}^K \frac{Z_i}{|x-R_i|}, \quad Z=\sum_{i=1}^K Z_i,
\]
where $\uZ = (Z_1, \dots, Z_K) \in \R_+^K$ are the charges of fixed  nuclei located at $\uR = (R_1, \dots, R_K) \in \R^{3K}$.

For $N> 0$ (not necessarily integer valued)  and $Z_i \ge 0$, we now define the ground state energy in M\"uller theory by
\[
E_{\uR}(N, Z) = \inf \left\{\mathcal{E}_{\uR}(\gamma) \colon
\gamma \in \mathcal{P},
\tr \gamma = N\right\}
\]
where $\mathcal{P} = \{ \gamma \colon \gamma =\gamma^{\dagger}, 0 \le \gamma \le 1, (-\Delta +1)^{1/2}\gamma (-\Delta +1)^{1/2} \in \mathcal{S}^1\}$, $\mathcal{S}^1$ is the set of trace-class operators. 
When $N \le Z$, it was shown by Frank et. al.~\cite{frank2007muller} that $E_{\uR}(N, Z)$ has a minimizer.

In this paper, we will investigate minimization of the M\"uller energy over the nuclear positions $R_j$, that is, the Born-Oppenheimer energy of a molecule defined as
\begin{equation}
\label{inf}
E(N, \uZ) = \inf_{\uR} \left\{E_{\uR}(N, Z) + U_{\uR}\right\},
\end{equation}
where $U_{\uR}$ is the nuclear-nuclear repulsion
\[
U_{\uR} = \sum_{i < j}\frac{Z_iZ_j}{|R_i - R_j|}.
\]

Our purpose is to explore the existence of molecules in M{\"u}ller theories.
Following, we will say that the molecular system is stable if there exists a density-matrix $\gamma$ with $\tr \gamma = N$ such that $E(N, \uZ) = \mathcal{E}_{\uR}(\gamma) + U_{\uR} $ for some  $\uR \in \R^{3K}$.

Analogously to a series of works~\cite{CattoLions1, CattoLions2, CattoLions3, CattoLions4} by Catto and Lions on the Thomas-Fermi and Hartree type theories, we prove that any molecular system is stable under the M\"uller theory if and only if all possible two molecules can be bound.

It is well-known that, due to the classical work of Lieb and Thirring~\cite{LT1986}, neutral atoms and molecules are stable in the nonrelativistic Schr{\"o}dinger theory.
In particular, it was shown that the $R^{-6}$ attractive interaction energy, among molecules for large separation $R$, appears from the dipole-dipole interaction.
On the other hand, density-functional theory may not have the same feature, since it deals only with single particle densities, as pointed out in~\cite{LT1986}.
In Thomas-Fermi theory,  two neutral molecules can never be bound by Teller's no-binding theorem~\cite{LiebTF, LiebSimon1977}.
We refer to~\cite{CattoLions1, CattoLions2, CattoLions3, CattoLions4, BrisLions2005, LiebTF} for other Thomas-Fermi type theories and Hartree-Fock theories.
We recall M{\"u}ller theory is not a density functional but a density-matrix functional theory.
Namely, this theory describes the energy as a functional of the one-body density matrix $\gamma$, rather than a one-particle density $\rho$.
The first goal of this article is to extend the methods of~\cite{CattoLions1, CattoLions2, CattoLions3, CattoLions4} to investigate the M\"uller theory of molecules.

Let us define
\[
\Eh_{\uR}(\gamma) = \E_{\uR}(\gamma) + \frac{\tr \gamma}{8}.
\]

We note that
\[
-\frac{N}{8} = E_\infty(N) = \inf \left\{\E_\infty (\gamma) \colon \tr \gamma = N \right\}
\]
by~\cite[Propositon 1]{frank2007muller}, where
\[
 \E_\infty (\gamma) \coloneqq  \tr \left( -\frac{1}{2} \Delta \right) \gamma
+D[\rho_{\gamma}] -X(\gamma^{1/2}).
\]

For technical reason, we set a relaxed problem
\begin{equation}
\label{relax}
\eh_{\le}(N, \uZ) = \inf_{\uR} \left\{\eh_{\le}(N, Z, \uR)+U_{\uR}\right\},
\end{equation}
where
\[
\eh_{\le}(N, Z, \uR) = \inf \left\{\Eh_{\uR}(\gamma) \colon \gamma \in \mathcal{P}, \tr \gamma \le N\right\}.
\]
For any $N > 0$, $Z > 0$, it was shown in~\cite{frank2007muller}, $\eh_{\le}(N, Z, \uR)$ has a minimizer.

Our results are as follows.

\begin{thm}
\label{generaliff}
Any minimizing sequence $(\uR_n)_n \subset \R^{3K}$ for (\ref{relax}) is bounded if and only if
 \begin{equation}
\label{general}
\eh_{\le}(N, \uZ) < \eh(N_1, \underline{Z_1}) + \eh(N_2, \underline{Z_2})
\end{equation}
for all $N_i \ge 0$, $i=1, 2$, such that $N_1+N_2\le N$ and for any configuration $\underline{Z_1} = (Z_{j(1)}, \dots, Z_{j(p)})$ and $\underline{Z_2}= (Z_{j(p+1)}, \dots, Z_{j(K)})$, j permutation of $\{1, \dots, K \}$.
\end{thm}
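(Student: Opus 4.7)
The plan is to adapt the Catto--Lions concentration-compactness strategy to the M\"uller functional, treating the two implications separately.

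\emph{Direction $(\Rightarrow)$.} Suppose every minimizing sequence for~\eqref{relax} is bounded. The weak inequality
\[
\eh_\le(N, \uZ) \le \eh(N_1, \underline{Z_1}) + \eh(N_2, \underline{Z_2})
\]
is obtained by taking near-minimizers $\gamma_i$ at positions $\uR^{(i)}$ for the two subproblems, rigidly translating the second cluster to distance $D$ from the first, and using $\gamma_1 + \gamma_2$ as a trial density matrix (it remains in $\mathcal{P}$ since the two pieces have essentially disjoint spatial supports for large $D$). The cross direct-Coulomb, M\"uller exchange, nuclear-nuclear, and nuclear-electron cross terms all decay as $O(1/D)$, yielding the weak inequality in the limit. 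If equality held in~\eqref{general} for some admissible splitting, the same construction with $D \to \infty$ would produce a minimizing sequence for~\eqref{relax} with $|\uR^{(1)} - \uR^{(2)}| \to \infty$, which is unbounded, contradicting the hypothesis. Hence~\eqref{general} must be strict.

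\emph{Direction $(\Leftarrow)$.} Assume~\eqref{general}, and suppose for contradiction that $(\uR_n)$ is an unbounded minimizing sequence for~\eqref{relax}, with associated minimizers $\gamma_n$ of $\eh_\le(N, Z, \uR_n)$ (available from~\cite{frank2007muller}). After a subsequence and permutation of labels, the nuclei split into two non-empty clusters $\uR_n^{(1)} = (R_{n,1},\dots,R_{n,p})$ and $\uR_n^{(2)} = (R_{n,p+1},\dots,R_{n,K})$ with bounded intra-cluster distances but mutual separation $d_n \to \infty$. Introduce smooth IMS cutoffs $\chi_1^n, \chi_2^n$ on $\R^3$ with $(\chi_1^n)^2 + (\chi_2^n)^2 = 1$, each $\chi_i^n$ equal to $1$ on a $d_n/3$-neighborhood of the $i$-th cluster and to $0$ on the other, with $\|\nabla \chi_i^n\|_\infty = O(1/d_n)$. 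Set $\gamma_n^{(i)} = \chi_i^n \gamma_n \chi_i^n$, so $\gamma_n^{(i)} \in \mathcal{P}$ and $N_n^{(i)} \coloneqq \tr\gamma_n^{(i)}$ satisfies $N_n^{(1)} + N_n^{(2)} = \tr \gamma_n \le N$. The IMS identity splits the kinetic term with error $O(d_n^{-2})$; the cross nuclear-attraction terms decay like $1/d_n$; the Coulomb cross terms (direct plus cross nuclear-nuclear and nuclear-electron) combine into a nonnegative multipole remainder that vanishes as $d_n \to \infty$; and the M\"uller exchange decouples up to $o(1)$ thanks to the disjoint-support structure. Assembling,
\[
\Eh_{\uR_n}(\gamma_n) + U_{\uR_n} \ge \sum_{i=1}^2 \bigl(\Eh_{\uR_n^{(i)}}(\gamma_n^{(i)}) + U_{\uR_n^{(i)}}\bigr) + o(1).
\]
Extracting a further subsequence with $N_n^{(i)} \to N_i$ (so $N_1 + N_2 \le N$), and using lower semicontinuity of $\eh(\cdot,\underline{Z_i})$ in the particle number, the right-hand side is bounded below by $\eh(N_1,\underline{Z_1}) + \eh(N_2,\underline{Z_2}) + o(1)$, contradicting~\eqref{general}.

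\emph{Main obstacle.} The delicate step is the localization of the M\"uller exchange $X(\gamma_n^{1/2})$: unlike the direct term $D[\rho_{\gamma_n}]$, it depends on $\gamma_n^{1/2}$, whose spectral definition does not commute with multiplication by the cutoffs $\chi_i^n$. The natural route is to exploit that the kernels of $\gamma_n^{(1)}$ and $\gamma_n^{(2)}$ are supported in regions of $\R^3\times\R^3$ separated by distance $\sim d_n$, so that the off-cluster contribution to $X$ is controlled by Cauchy--Schwarz combined with the Coulomb decay $1/d_n$. A secondary technical matter---passage to the limit of $\eh(N_n^{(i)},\underline{Z_i})$ as $N_n^{(i)} \to N_i$---should follow from monotonicity and convexity properties of the M\"uller ground-state energy established in~\cite{frank2007muller}.
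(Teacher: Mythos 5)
Your strategy mirrors the paper's concentration-compactness approach (Proposition~\ref{ifpart} plus the remark for the ``only if'' direction, IMS localization and passage to a weak limit for the ``if'' direction), but several of the steps you treat as routine are precisely the ones the paper has to work to justify, and a few of your auxiliary claims are inaccurate.

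In the $(\Rightarrow)$ direction, the assertion that $\gamma_1 + \gamma_2 \in \mathcal{P}$ because the pieces have ``essentially disjoint spatial supports'' is not correct as stated: one needs $\gamma_1 + \gamma_2 \le 1$, which requires \emph{actually} disjoint supports so that the ranges are orthogonal. The paper handles this by first replacing the near-minimizers with density matrices having compactly supported integral kernels (the lemma quoted from~\cite{Kehle2017}) before translating; without that step the trial state may leave $\mathcal{P}$.

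In the $(\Leftarrow)$ direction, the claim that after a subsequence the nuclei split into two clusters ``with bounded intra-cluster distances but mutual separation $d_n \to \infty$'' is false in general: with three nuclei at $0$, $n\hat e$, $n^2\hat e$ no binary partition has both blocks of bounded diameter. Fortunately this assumption is never actually used in your estimate (you bound the energy of block $2$ below by $\eh(\cdot,\underline{Z_2})$, which is an infimum over all nuclear configurations), but the correct setup is the paper's: fix one nucleus at the origin, let $J$ be the set of indices whose $R_j^n$ stay bounded, localize only around the bounded cluster, and lump everything else---electrons and escaping nuclei in whatever geometry---into the complementary piece. The more serious gaps are the two you yourself flag as ``obstacles.'' The exchange localization $X(\gamma_n^{1/2}) \le X((\chi_1\gamma_n\chi_1)^{1/2}) + X((\chi_2\gamma_n\chi_2)^{1/2}) + o(1)$ is genuinely nontrivial because $\gamma^{1/2}$ does not commute with localization; Cauchy--Schwarz on the off-diagonal kernel contribution does not by itself reduce it to the claimed form, and the paper relies on the specific estimate established in~\cite{frank2007muller}. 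Likewise, your final step needs lower semicontinuity of $N \mapsto \eh(N,\underline{Z_i})$, which you defer to unspecified monotonicity/convexity facts; the paper instead sidesteps this by calibrating the cutoff radius $L_n$ so that $\tr \gamma_n^0 = M = \tr\gamma$ exactly, obtaining $\gamma_n^0 \to \gamma$ in $\mathcal{S}^1$ and using lower semicontinuity of the \emph{functional} (Proposition~3 of~\cite{frank2007muller}) rather than of the ground-state energy. As written, then, your argument is the right plan with two real, unproved steps (the exchange IMS estimate and the continuity of the ground-state energy) plus a fixable inaccuracy in the trial-state construction.
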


As mentioned above, for $N \le Z$, a minimizer of M{\"u}ller energy has trace $N$.
Thus $\eh_{\le}(N, \uZ) = E(N, \uZ) + N/8$ and the molecules are stable when the binding inequality (\ref{general}) holds.
Moreover,

\begin{thm}
\label{reduce1}
We assume  $\eh_{\le}(N, \uZ) = E(N, \uZ) +N/8$. Then any minimizing sequence $(\uR_n)_n \subset \R^{3K}$ for (\ref{inf}) is bounded if and only if
 \begin{equation}
\label{genreduce}
E(N, \uZ) < E(N_1, \underline{Z_1}) + E(N_2, \underline{Z_2})
\end{equation}
for all $N_i \ge 0$, $i=1, 2$, such that $N_1+N_2 = N$ and for any configuration $\underline{Z_1} = (Z_{j(1)}, \dots, Z_{j(p)})$ and $\underline{Z_2}= (Z_{j(p+1)}, \dots, Z_{j(K)})$, j permutation of $\{1, \dots, K \}$.
\end{thm}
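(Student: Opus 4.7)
My plan is to deduce Theorem \ref{reduce1} from Theorem \ref{generaliff} by exploiting the hypothesis $\eh_{\le}(N, \uZ) = E(N, \uZ) + N/8$, which identifies the value of the relaxed problem (\ref{relax}) with the shifted value of the strict problem (\ref{inf}). Pairing any nuclear sequence $(\uR_n)$ with trace-$N$ near-minimizers $\gamma_n$ of $\E_{\uR_n}$ produces a minimizing sequence for (\ref{relax}) whenever $(\uR_n)$ minimizes (\ref{inf}), and vice versa. Hence boundedness of $(\uR_n)$ for one problem is equivalent to boundedness for the other, and it suffices to relate the binding conditions (\ref{general}) and (\ref{genreduce}).

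For the direction ($\Rightarrow$), I argue by contrapositive: if (\ref{genreduce}) fails for some partition with $N_1 + N_2 = N$, then the reverse inequality (standard: place near-minimizers for the subsystems far apart so that the Coulomb cross-terms vanish) forces equality, and the same separation produces an unbounded minimizing sequence for (\ref{inf}), contradicting the boundedness hypothesis.

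For the direction ($\Leftarrow$), assume (\ref{genreduce}). I verify the binding condition (\ref{general}) required by Theorem \ref{generaliff}. For $N_1 + N_2 = N$, (\ref{general}) is equivalent to (\ref{genreduce}) after adding $N/8$ to both sides, using the hypothesis and the identity $\eh(N_i, \underline{Z_i}) = E(N_i, \underline{Z_i}) + N_i/8$. For $N_1 + N_2 < N$, I set $N_1' = N - N_2 > N_1$ and apply (\ref{genreduce}) to the partition $(N_1', N_2)$; combined with monotonicity of the subsystem energy in the electron number, this yields the corresponding bound for the original $(N_1, N_2)$. Theorem \ref{generaliff} then furnishes boundedness of all minimizing sequences for (\ref{relax}), hence, by the identification above, for (\ref{inf}).

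The main obstacle is the second case in the ($\Leftarrow$) direction: deriving (\ref{general}) for $N_1 + N_2 < N$ from (\ref{genreduce}) alone requires showing that the escape of $N - N_1 - N_2$ electrons to infinity costs at least $(N - N_1 - N_2)/8$ in energy. This is precisely the content of the hypothesis at the full-system level, and the delicate point is to transfer it to subsystems, most plausibly via the identity $E_\infty(M) = -M/8$ recalled in the introduction combined with an energy-comparison argument that prevents mass loss at infinity from being energetically advantageous.
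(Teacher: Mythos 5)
Your proposal is correct and the underlying mechanism is the same as the paper's, but the packaging is genuinely different. The paper re-opens the proof of Theorem~\ref{generaliff}: it observes that under $\eh_{\le}(N, \uZ) = E(N, \uZ) + N/8$ one may take $K = N$ in the concentration-compactness argument (no mass lost at infinity), so that the binding condition only needs to be verified for $N_1 + N_2 = N$, where (\ref{general}) and (\ref{genreduce}) coincide after adding $N/8$. You instead treat Theorem~\ref{generaliff} as a black box and show that (\ref{genreduce}) for $N_1 + N_2 = N$ already implies (\ref{general}) for all $N_1 + N_2 \le N$, by replacing $N_1$ with $N_1' = N - N_2 \ge N_1$ and using the monotonicity $\eh(N_1', \underline{Z_1}) \le \eh(N_1, \underline{Z_1})$. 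That monotonicity is precisely the content of $\eh_\infty(M) = E_\infty(M) + M/8 = 0$: one can dump the extra $N_1' - N_1$ electrons at infinity at zero shifted cost, using a compactly supported trial state as in the proof of Lemma~\ref{if}, with vanishing cross Coulomb terms. You correctly flag this as the one delicate point, and the justification you sketch is the right one and is already supplied by the paper's Lemma~\ref{if}/Proposition~\ref{ifpart} machinery. The advantage of your route is that Theorem~\ref{generaliff} does not need to be reproved; the (minor) cost is that you need the monotonicity of $\eh$, while the paper buries the same observation in the remark ``we can take $K=N$.'' Your identification of the minimizing sequences of (\ref{relax}) and (\ref{inf}) under the hypothesis, and your contrapositive argument for the ``only if'' direction (reverse inequality from Proposition~\ref{ifpart} plus separation producing an unbounded sequence), both match the paper.
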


It is expected that  binding occurs for $N\le Z$ molecules or ions, though it is an open question.
Even in the Hartree-Fock theory, the existence of molecules is still open except in special cases~\cite{CattoLions1, CattoLions2, CattoLions3, CattoLions4, BrisLions2005}.

One main purpose of this article is as follows.

\begin{thm}[Bound on the  positive excess charge]
\label{th:ionization}
We assume $N \le c_1Z$ and $Z_\mathrm{min} \coloneqq \min\{Z_1, \dots, Z_K \} \ge c_2Z$ with some constants $c_i > 0$, $i=1,2$, independent of $Z$.
If there exist a stable configuration $\uR = (R_1, \dots, R_K) \in \R^{3K}$ and a density matrix $\gamma \in \mathcal{P}$  such that $\E_{\uR}(\gamma)+U_{\uR}  = E(N, \uZ) $,  then there exist $C_0>0$ depending only on $Z_1, \dots, Z_K$, and $K$, $c_i>0$ such that
\begin{equation}
\label{eq:ionization}
Z - N \le C_0Z^{1-\delta}
\end{equation}
for some $\delta>0$.

Moreover, if we put $R_\mathrm{min} \coloneqq \min_{i \neq j} |R_i - R_j|$, then there is a constant $C>0$ depending on the same quantities as above $C_0$ so that
\begin{equation}
\label{th:radius}
R_\mathrm{min} > CZ^{-(1/3)(1-\epsilon)},
\end{equation}
where $\epsilon = 2/77$. 
\end{thm}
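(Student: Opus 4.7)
My plan has two parts, corresponding to the ionization bound (\ref{eq:ionization}) and the separation estimate (\ref{th:radius}); the second builds on the first.

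For Part 1, I would adapt the Solovej-type iterative strategy from Hartree-Fock theory to the M\"uller setting. Let $\gamma$ be the given minimizer and $\rho = \rho_\gamma$. From the variational principle one obtains the Euler-Lagrange inequality
\[
P_+\left(-\tfrac12\Delta - V_{\uR} + \rho\ast|x|^{-1} - \mathcal{K}_\gamma - \mu\right)P_+ \ge 0,
\]
where $\mathcal{K}_\gamma$ is the nonlocal M\"uller exchange operator with kernel $\gamma^{1/2}(x,y)/|x-y|$, $P_+$ projects onto $\mathrm{Ran}(\gamma)$, and $\mu \le 0$. The core step is a quantitative comparison of $\rho$ with the molecular Thomas-Fermi density $\rho^{TF}_{\uR,Z}$: one uses the energy expansion $E_{\uR}(N,Z) = E^{TF}_{\uR}(N,Z) + O(Z^{2-\eta})$ together with convexity of the TF functional to deduce $\|\rho - \rho^{TF}\|_D \lesssim Z^{1-\eta/2}$ in the Coulomb norm. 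Combining this with the Sommerfeld decay $\rho^{TF}(x) \lesssim \mathrm{dist}(x, \{R_j\})^{-6}$ in the outer region and a tail estimate on $\rho$ extracted from the EL inequality (with the exchange contribution controlled by trace-class techniques) yields $Z - N \le C_0 Z^{1-\delta}$ for some $\delta > 0$.

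For Part 2, I would argue by contradiction. Assume some pair $i \ne j$ has $|R_i - R_j| = R_{\min} \le cZ^{-(1-\epsilon)/3}$. Split the nuclei into two clusters containing $R_i$ and $R_j$ respectively, and use the stability assumption to compare $\E_{\uR}(\gamma) + U_{\uR}$ with the energy of the same clusters pulled infinitely far apart. The nuclear repulsion lost in this displacement is bounded below by $Z_i Z_j / R_{\min} \ge c_2^2 Z^{2 + (1-\epsilon)/3}$, and this must be offset by an electronic binding gain. However, by a Teller-type no-binding estimate at the TF level combined with the ionization bound of Part~1 (which controls the dipole and higher multipole moments of each cluster's electronic density and hence the screened inter-cluster interaction), the electronic gain is strictly subleading in $Z$. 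Matching the two estimates forces a lower bound on $R_{\min}$, and the explicit value $\epsilon = 2/77$ emerges from optimizing the balance between the ionization exponent $\delta$ from Part~1 and the screening exponent.

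The chief obstacle is Part~1: controlling the nonlocal M\"uller exchange. The first variation of $X(\gamma^{1/2})$ involves $\gamma^{1/2}$ rather than $\gamma$, and pointwise bounds on $\gamma^{1/2}(x,y)$ are unavailable. One must instead rely on trace-class and Hilbert-Schmidt estimates, together with operator monotonicity of the square root and interpolation, to transfer spectral information on $\gamma$ into effective bounds on $\mathcal{K}_\gamma$. The sharpness of these estimates is precisely what determines the achievable exponents $\delta$ and $\epsilon$.
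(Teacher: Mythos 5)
Your plan inverts the logical structure of the paper's argument, and Part~1 attacks the wrong inequality.

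The theorem bounds the \emph{positive} excess charge $Z-N$ from above, i.e.\ it shows that a stable molecule must be nearly neutral. This is logically distinct from the ionization conjecture, which bounds the \emph{negative} excess charge $N-Z$. The Solovej-type Euler--Lagrange iteration you propose in Part~1 (exterior localization, Sommerfeld tail, screening via the EL inequality) is precisely the mechanism of \cite{SolovejIC,mullerIC} for controlling $N-Z$, and it does not use the stability assumption at all. Since for every fixed $\uR$ and every $N\le Z$ a M\"uller minimizer exists, no bound on $Z-N$ can follow from the fixed-$\uR$ variational inequality alone; the statement genuinely requires using that $\uR$ minimizes the Born--Oppenheimer energy including nuclear repulsion. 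The paper also explicitly remarks that the Solovej-type argument is only available for a single M\"uller atom and remains open for molecules, so even the step you hope to borrow is not at your disposal here.

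The paper proceeds in the opposite order. Section~4 first proves the radius bound $R_{\min}\ge CZ^{-(1/3)(1-\epsilon)}$ with $\epsilon=2/77$ together with the Coulomb-norm estimate $D[\rho_\gamma-\rho^{\mathrm{TF}}_{\mathrm{mol}}]\le CZ^{25/11}$. These come from a semiclassical lower bound $\E_{\uR}(\gamma)\ge \E^{\mathrm{TF}}_{\uR}(\rho^{\mathrm{TF}}_{\mathrm{mol}})+D[\rho_\gamma-\rho^{\mathrm{TF}}_{\mathrm{mol}}]-CZ^{25/11}$, a matching semiclassical upper bound on the atomic M\"uller energy, and the Brezis--Lieb/Benguria/Ruskai--Solovej fact that the Thomas--Fermi interaction $\Gamma$ is $\gtrsim R_{\min}^{-7}$; the exponent $\epsilon=2/77$ arises from $25/77=(1/3)(1-\epsilon)$, not from any balance against a $\delta$ from an excess-charge bound. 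Only then, in Section~5, does the paper derive $Z-N\lesssim Z^{1-\delta}$, using the radius bound and the Coulomb-norm estimate as inputs: one localizes $\gamma$ around each nucleus at scale $R_{\min}$ via an IMS-type inequality for the M\"uller functional, uses stability and subadditivity to write $\E_{\uR}(\gamma)+U_{\uR}\le\sum_j\E_{\mathrm{atom}}(\theta_j\gamma\theta_j)+\E_\infty(\theta_0\gamma\theta_0)$, and then estimates the residual cross terms $I_{ij}$ through the TF comparison to reach $\sum_{i<j}(Z_i-N_i^{\mathrm{TF}})(Z_j-N_j^{\mathrm{TF}})/|R_i-R_j|\lesssim Z^{1-\delta}\,(\text{charges})\,R_{\min}^{-1}$; Lemma~\ref{lem:TF2} then forces $Z_j-N_j^{\mathrm{TF}}\lesssim Z^{1-\delta}$ for all $j$ and summing gives the claim.

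So your Part~2 has the right flavor for the radius bound (TF no-binding via $\Gamma$), but it must be proved \emph{first} and independently of any excess-charge control, and your Part~1 should be replaced by the IMS-plus-screened-nuclei argument above, which uses stability in an essential way. As stated, Part~1 would fail and Part~2 would then have nothing to build on.
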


\begin{rem}
It is expected that if a M\"uller minimizer exists, then $N \le CZ$ holds.
In fact, for the atomic case, if there is a minimizer then $N \le Z + \mathrm{const}$ holds~\cite{mullerIC}.
However, the proof  works only for the atomic case, and it is still an open issue for molecular cases.
\end{rem}

\begin{rem}
The  estimate (\ref{th:radius}) states that the molecular radii in the frame work of M\"uller theory are much larger than the Thomas-Fermi atomic radii, namely $Z^{-1/3}$.
Thus the Thomas-Fermi density of the molecule is of order of the sum of atomic densities.
Solovej and Ruskai~\cite{RuskaiSolovej, SolovejAN} showed by using this type estimate that the asymptotic neutrality $N-Z = o(Z)$ for molecules in nonrelativistic  Schr\"odinger theory.
\end{rem}

\section{di-atomic case}
First, we consider a simple di-atomic case. Without loss of generality, we may assume
\[
V_{\uR}(x) = V_R(x) = \frac{Z_1}{|x|} + \frac{Z_2}{|x-R \hat e|}, \quad U_{\uR} = U_R = \frac{Z_1Z_2}{R},
\]
where $R > 0$, and $\hat e \in \R^3$ is an unit vector.
Then our minimization problem is

\begin{equation}
\label{diatomic}
\eh_{\le}(N, Z) = \inf_{R>0} \left\{\eh_{\le}(N, Z, R)+\frac{Z_1Z_2}{R} \right\}.
\end{equation}

In this section our goal is

\begin{thm}
\label{iff}
Any minimizing sequence for (\ref{diatomic}) is bounded if and only if

\begin{equation}
\label{binding}
\eh_{\le}(N, Z) < \eh_\mathrm{atom}(N_1, Z_1) + \eh_\mathrm{atom} (N_2, Z_2),
\end{equation}
for all $0\le N_i$, $i=1, 2$, such that $N_1+N_2\le N$.
Here 
\[
\eh_\mathrm{atom}(N, Z) = \inf \{ \Eh_\mathrm{atom}(\gamma) \colon \gamma \in \mathcal{P}, \tr \gamma = N\},
\]
 and
\[
\Eh_\mathrm{atom}(\gamma ) = \tr \left( -\frac{1}{2} \Delta -Z|x|^{-1} \right) \gamma
+D[\rho_{\gamma}] -X(\gamma^{1/2}) +\frac{\tr \gamma}{8}.
\]
\end{thm}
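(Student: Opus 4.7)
The plan is to prove both implications, with the analytic content concentrated in showing that strict binding rules out runaway nuclear configurations. For the direction ``bounded minimizing sequences $\Rightarrow$ strict binding'', I would first establish the universally valid upper bound $\eh_\le(N, Z) \le \eh_\mathrm{atom}(N_1, Z_1) + \eh_\mathrm{atom}(N_2, Z_2)$ for every admissible $(N_1, N_2)$ with $N_1 + N_2 \le N$. This is obtained by taking minimizers $\gamma_i^*$ for the atomic subproblems (which exist by~\cite{frank2007muller}), translating $\gamma_2^*$ to be centered at $R \hat e$, truncating each to a ball of radius $R/3$ about its nucleus (using the decay of atomic M\"uller minimizers), and forming $\gamma_R$ as a direct sum. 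Since the two summands then have disjoint spatial supports, the cross exchange is identically zero, while the cross direct Coulomb, cross external potentials, and $U_R$ combine through the multipole expansion into $(Z_1 - N_1)(Z_2 - N_2)/R + O(1/R^2) = o(1)$. If the binding inequality failed to be strict for some admissible split, equality would hold in this upper bound and the trial sequence $R_n \to \infty$ would be a minimizing sequence, contradicting the boundedness hypothesis.

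For the converse, assume the strict inequality~(\ref{binding}) and, toward a contradiction, take a minimizing sequence with $R_n \to \infty$ along a subsequence and minimizers $\gamma_n$ of $\eh_\le(N, Z, R_n)$ from~\cite{frank2007muller}. Introduce a smooth IMS partition $(\chi_1^{(n)})^2 + (\chi_2^{(n)})^2 = 1$, with $\chi_1^{(n)} \equiv 1$ on $\{x \cdot \hat e \le R_n/2 - R_n^{1/2}\}$, $\chi_2^{(n)} \equiv 1$ on $\{x \cdot \hat e \ge R_n/2 + R_n^{1/2}\}$, and transition region of width $O(R_n^{1/2})$. Define $\gamma_n^{(i)} = \chi_i^{(n)} \gamma_n \chi_i^{(n)} \in \mathcal{P}$, and along a subsequence let $\tr \gamma_n^{(i)} \to N_i \in [0, N]$ with $N_1 + N_2 \le N$. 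I would then establish term by term the lower bound
\[
\Eh_{R_n}(\gamma_n) + U_{R_n} \ge \Eh^{Z_1}_{\mathrm{atom}}(\gamma_n^{(1)}) + \Eh^{Z_2}_{\mathrm{atom}}(\gamma_n^{(2)}) + o(1),
\]
using: the IMS identity, with kinetic localization error $\int \sum_i |\nabla \chi_i^{(n)}|^2 \rho_{\gamma_n} = O(N/R_n)$; the pointwise bound $V_{Z_{3-i}}(x) = O(1/R_n)$ on $\mathrm{supp}\,\chi_i^{(n)}$; the decomposition $D[\rho_{\gamma_n}] = \sum_i D[\rho_{\gamma_n^{(i)}}] + 2 D(\rho_{\gamma_n^{(1)}}, \rho_{\gamma_n^{(2)}})$ with the non-negative cross term absorbing the remaining $O(1/R_n)$ contributions together with $U_{R_n}$; exact additivity of $\tr \gamma/8$; and the exchange splitting discussed below. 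Passing to the $\liminf$ and using lower semicontinuity of $\eh_\mathrm{atom}$ yields $\eh_\le(N, Z) \ge \eh_\mathrm{atom}(N_1, Z_1) + \eh_\mathrm{atom}(N_2, Z_2)$, contradicting~(\ref{binding}).

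The main obstacle is the exchange. Unlike in Hartree--Fock, M\"uller exchange $X(\gamma^{1/2})$ does not split cleanly under cutoffs, because $(\chi \gamma \chi)^{1/2} \ne \chi \gamma^{1/2} \chi$. My plan is to insert $1 = \sum_{i,j} \chi_i^2(x) \chi_j^2(y)$ inside $X(\gamma_n^{1/2})$, to control the cross $(i \ne j)$ pieces by combining the Hilbert--Schmidt bound $\|\chi_1^{(n)} \gamma_n^{1/2} \chi_2^{(n)}\|_{HS}^2 \le N$ with the pointwise Cauchy--Schwarz estimate $|\gamma_n^{1/2}(x,y)|^2 \le \gamma_n^{1/2}(x,x)\, \gamma_n^{1/2}(y,y)$ together with decay of the one-particle density of $\gamma_n$, and to compare each diagonal piece to $X(\gamma_n^{(i), 1/2})$ via the operator inequality $\chi_i^{(n)} \gamma_n^{1/2} \chi_i^{(n)} \le (\chi_i^{(n)} \gamma_n \chi_i^{(n)})^{1/2} = \gamma_n^{(i), 1/2}$, which follows from $(\chi_i^{(n)} \gamma_n^{1/2} \chi_i^{(n)})^2 \le \chi_i^{(n)} \gamma_n \chi_i^{(n)}$ and operator monotonicity of the square root. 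Turning this chain of inequalities into a clean $o(1)$ bound on the quadratic form $B \mapsto \tfrac{1}{2}\iint |B(x,y)|^2/|x-y|\,dxdy$---without any $O(1)$ loss---will be the heart of the argument and the ingredient distinguishing M\"uller theory from its Hartree--Fock analog in~\cite{CattoLions1, CattoLions2, CattoLions3, CattoLions4}.
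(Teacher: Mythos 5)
Your skeleton matches the paper's: for the ``only if'' direction build a decoupled trial state from atomic minimizers and let $R \to \infty$; for the ``if'' direction localize a runaway minimizing sequence with an IMS-type partition, split the energy into two atomic pieces, and pass to the limit using lower semicontinuity. The upper-bound step is essentially what the paper does (the paper regularizes atomic minimizers to compactly supported states of the same trace via a lemma modeled on \cite{Kehle2017} rather than truncating directly, which is cleaner since truncation changes the trace; also the multipole cancellation $(Z_1-N_1)(Z_2-N_2)/R$ is not needed — it suffices that $2D(\rho_1,\rho_2) \le N_1N_2/(R-2r)$ and $U_R = Z_1Z_2/R$ both tend to zero while the attractive cross-potential terms have the right sign and can be dropped).

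The genuine gap is in your lower bound. You propose a fixed half-space IMS partition at the midpoint $R_n/2$ with a transition slab of width $O(R_n^{1/2})$. The IMS kinetic error and the far cross terms are indeed $o(1)$, and your plan for the diagonal exchange pieces is correct (the operator inequality $\chi\gamma^{1/2}\chi \le (\chi\gamma\chi)^{1/2}$ together with monotonicity of $X$ on positive operators, the latter following from the Fourier representation $X(A) = c\int |k|^{-2}\,\tr[A e^{ik\cdot x}A e^{-ik\cdot x}]\,dk$ and monotonicity of each $A\mapsto\tr[AU_kAU_k^*]$). But the cross exchange term
\[
\iint \frac{\chi_1^2(x)\,|\gamma_n^{1/2}(x,y)|^2\,\chi_2^2(y)}{|x-y|}\,dx\,dy
\]
is \emph{not} $o(1)$ without further information: contributions where $x$ or $y$ lie in the transition slab are only bounded by a piece of $X(\gamma_n^{1/2}) = O(1)$, and nothing in your setup forces the mass $\tr(\1_{S_n}\gamma_n)$ in the slab to vanish. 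The pointwise Cauchy--Schwarz bound and $\|\chi_1\gamma_n^{1/2}\chi_2\|_{HS}^2 \le N$ give an $O(1)$ bound, not an $o(1)$ bound, and the ``decay of the one-particle density of $\gamma_n$'' you invoke is not available uniformly in $n$ without a separate argument.

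The paper resolves exactly this point with a concentration-compactness step that your plan skips: first extract a weak limit $\gamma$ of $\gamma_n$ (using the $H^1$-type bound from the hydrogen and exchange inequalities and Banach--Alaoglu), show $\gamma \not\equiv 0$, and then localize with a \emph{tuned radial} cutoff $\chi^0(|x|/L_j)$ where $L_j$ is chosen so that $\tr(\chi^0_j)^2\gamma_j = M := \tr\gamma$, forcing $\gamma_j^0 \to \gamma$ in $\mathcal{S}^1$. This guarantees the mass in the transition annulus $\{L_j < |x| < 2L_j\}$ vanishes, which is precisely what makes the exchange cross term $o(1)$ (the paper cites \cite{frank2007muller} for the detailed estimate). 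The paper also treats separately the ``compactness'' case $\tr\gamma = \lim\tr\gamma_n$ (where $\gamma_n\to\gamma$ strongly in $\mathcal{S}^1$ and the far nuclear potential drops out directly), and the dichotomy case via the tuned cutoff. Your fixed mid-slab partition would need to be replaced by (or supplemented with) this weak-limit/tuned-radius mechanism for the lower bound to close.
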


The next Lemma corresponds to the `only if' part of the theorem.

\begin{lem}
\label{if}
For all $N_i \ge 0$, $i=1, 2$, with $N_1+N_2\le N$, we have
\begin{equation}
\begin{split}
\label{diatomicineq}
\eh_{\le}(N, Z) &\le \limsup_{R \to \infty} (\eh_{\le}(N, Z, R) + U_R) \\
&\le \eh_\mathrm{atom}(N_1, Z_1) + \eh_\mathrm{atom} (N_2, Z_2).
\end{split}
\end{equation}
\end{lem}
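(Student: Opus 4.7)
The first inequality is immediate, as the left-hand side is an infimum over $R > 0$ of the same expression. For the second, the strategy is to place an atomic near-minimizer at each nucleus. Fix $\epsilon > 0$ and choose $\gamma_i \in \mathcal{P}$ with $\tr \gamma_i = N_i$ and $\Eh_\mathrm{atom}(\gamma_i) \le \eh_\mathrm{atom}(N_i, Z_i) + \epsilon$, using the minimizer from~\cite{frank2007muller} when $N_i \le Z_i$ and an $\epsilon$-minimizer otherwise.

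To enforce the Pauli bound $\gamma_R \le 1$ in the combined state, I first localize: pick $\chi_L \in C_c^\infty(\R^3)$ with $\chi_L \equiv 1$ on $B(0,L)$, $\chi_L \equiv 0$ outside $B(0, 2L)$, and $0 \le \chi_L \le 1$, and set $\widetilde{\gamma}_i \coloneqq \chi_L \gamma_i \chi_L$. Then $0 \le \widetilde{\gamma}_i \le 1$, its kernel is supported in $B(0, 2L) \times B(0, 2L)$, and $\Eh_\mathrm{atom}(\widetilde{\gamma}_i) \to \Eh_\mathrm{atom}(\gamma_i)$ as $L \to \infty$ by an IMS-type identity for the kinetic part and dominated convergence for the Coulomb-type terms. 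For $R > 4L$ the two operators $\widetilde{\gamma}_1$ and $\tau_R \widetilde{\gamma}_2 \tau_R^*$ (with $\tau_R$ translation by $R\hat e$) have disjoint spatial supports, hence orthogonal ranges, so
\[
\gamma_R \coloneqq \widetilde{\gamma}_1 + \tau_R \widetilde{\gamma}_2 \tau_R^*
\]
satisfies $\gamma_R \in \mathcal{P}$ with $\tr \gamma_R \le N_1 + N_2 \le N$, and the square root splits: $\gamma_R^{1/2} = \widetilde{\gamma}_1^{1/2} + \tau_R \widetilde{\gamma}_2^{1/2} \tau_R^*$, again with disjoint supports.

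The disjoint-support structure makes the kinetic, direct Coulomb self, exchange, trace correction, and each nucleus' own-atom potential split as clean sums over $i = 1, 2$. The only cross contributions are the attraction of each localized atom by the distant nucleus and the cross direct Coulomb $2 D(\rho_{\widetilde{\gamma}_1}, \rho_{\tau_R \widetilde{\gamma}_2 \tau_R^*})$. Newton's theorem expresses each of these as a quantity of the form $(\cdot)/R + O(L/R^2)$, and combined with $U_R = Z_1 Z_2 / R$ they yield the net correction
\[
\frac{(Z_1 - \tr \widetilde{\gamma}_1)(Z_2 - \tr \widetilde{\gamma}_2)}{R} + O(L/R^2) \to 0 \quad \text{as } R \to \infty.
\]
Therefore $\limsup_{R \to \infty} (\eh_\le(N, Z, R) + U_R) \le \Eh_\mathrm{atom}(\widetilde{\gamma}_1) + \Eh_\mathrm{atom}(\widetilde{\gamma}_2)$, and sending $L \to \infty$ and then $\epsilon \to 0$ delivers (\ref{diatomicineq}).

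The main obstacle is coordinating three error parameters in the correct order: the Pauli constraint $\gamma_R \le 1$ forces the cutoff $\chi_L$, introducing an IMS-type error that vanishes only as $L \to \infty$; the $R \to \infty$ limit must be taken at fixed $L$ in order to kill the $O(L/R^2)$ remainder; and the initial $\epsilon$-minimization defect is then sent to zero last. A secondary subtlety is the case $N_i > Z_i$, where no atomic minimizer is known to exist: one must work throughout with $\epsilon$-minimizers, and although these need not possess especially good decay, the compression $\chi_L \gamma_i \chi_L$ still yields a valid element of $\mathcal{P}$ and the IMS computation still controls the energy loss.
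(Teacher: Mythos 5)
Your overall strategy matches the paper's: superpose compactly supported near-minimizers for the two atoms, translate one to distance $R$, and verify the cross terms and nuclear repulsion vanish as $R\to\infty$. Your treatment of the cross interactions is, if anything, sharper than the paper's: you organize them into the screened form $(Z_1-\tr\widetilde{\gamma}_1)(Z_2-\tr\widetilde{\gamma}_2)/R + O(L/R^2)$, whereas the paper simply discards the (favorable) cross-attraction terms and crudely bounds $2D(\rho_{\gamma_1},\rho_{\widehat{\gamma}_{2R}}) + U_R \le N_1N_2/(R-2r) + Z_1Z_2/R$. Both go to zero, so both suffice.

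The gap is in your localization step. You assert that $\Eh_\mathrm{atom}(\chi_L\gamma_i\chi_L) \to \Eh_\mathrm{atom}(\gamma_i)$ as $L\to\infty$ ``by an IMS-type identity for the kinetic part and dominated convergence for the Coulomb-type terms.'' Dominated convergence does not reach the exchange term $X(\gamma^{1/2})$: the operator square root of the compressed state $\widetilde{\gamma}_L = \chi_L\gamma\chi_L$ is \emph{not} $\chi_L\gamma^{1/2}\chi_L$, so you have neither pointwise convergence nor a usable dominating function for the kernels $\widetilde{\gamma}_L^{1/2}(x,y)$. Controlling the nonlinear square root under compression is precisely the nontrivial content here; the paper does not rederive it but invokes the lemma stated immediately before the proof (taken from Kehle, producing a compactly supported $\sigma$ with $|\E_R(\gamma)-\E_R(\sigma)|\le\epsilon$ and preserving the trace), whose substance is continuity of $\gamma\mapsto X(\gamma^{1/2})$ in the $\mathcal{P}$-topology where $(-\Delta+1)^{1/2}\gamma(-\Delta+1)^{1/2}$ converges in trace norm. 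Your write-up should either cite such a continuity/approximation result or prove it; as written, the exchange term is where the argument is incomplete. (A secondary cosmetic point: ``Newton's theorem'' technically requires spherical symmetry; what you actually need for the $O(L/R^2)$ estimate is just the triangle inequality $1/(R+2L)\le|x-R\hat e|^{-1}\le 1/(R-2L)$ on the support, or a multipole expansion.)
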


It immediately follows that

\begin{cor}
\label{ifcor}
We assume $\eh(N, Z) = \eh_\le(N, Z)$. For all $N_i \ge 0$, $i=1, 2$, with $N_1+N_2\le N$, we have
\begin{equation*}
\begin{split}
\label{diatomicineq2}
E(N, Z) &\le \limsup_{R \to \infty} (E_\le(N, Z, R) + U_R) \\
&\le E_\mathrm{atom}(N_1, Z_1) + E_\mathrm{atom} (N_2, Z_2).
\end{split}
\end{equation*}
\end{cor}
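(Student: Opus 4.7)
The plan is to deduce Corollary~\ref{ifcor} directly from Lemma~\ref{if} by carefully tracking the $\tr\gamma/8$ terms that distinguish $\Eh$ from $\E$. By construction one has $\eh(N,Z) = E(N,Z) + N/8$ and $\eh_\mathrm{atom}(N_i,Z_i) = E_\mathrm{atom}(N_i,Z_i) + N_i/8$, so the hypothesis may be rewritten as $E(N,Z) + N/8 = \eh_\le(N,Z)$.

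The second (right-hand) inequality, $\limsup_R(E_\le(N,Z,R) + U_R) \le E_\mathrm{atom}(N_1,Z_1) + E_\mathrm{atom}(N_2,Z_2)$, actually does not require the hypothesis: I would recycle the trial-state construction that proves Lemma~\ref{if}, but applied at the level of $\E$. Given atomic near-minimizers $\gamma_i \in \mathcal{P}$ with $\tr\gamma_i = N_i$ and $N_1+N_2\le N$, I set $\gamma_R = \gamma_1 + \gamma_2^{(R)}$, where $\gamma_2^{(R)}$ is $\gamma_2$ translated by $R\hat e$. For $R$ sufficiently large $\gamma_R$ belongs to $\mathcal{P}$ and has $\tr\gamma_R = N_1+N_2\le N$, hence is admissible for $E_\le(N,Z,R)$; then the standard cross-term and dispersion estimates that already appear in the proof of Lemma~\ref{if} give $\E_R(\gamma_R) + U_R \to \E_{\mathrm{atom},1}(\gamma_1) + \E_{\mathrm{atom},2}(\gamma_2)$ as $R\to\infty$, and taking the infimum over admissible $\gamma_1,\gamma_2$ yields the desired upper bound.

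For the first (left-hand) inequality the hypothesis becomes essential. I would start from Lemma~\ref{if} in the form $\eh_\le(N,Z) \le \limsup_R(\eh_\le(N,Z,R) + U_R)$, then insert the elementary comparison $\eh_\le(N,Z,R) \le E_\le(N,Z,R) + N/8$. This comparison holds because every admissible $\gamma$ satisfies $\tr\gamma/8 \le N/8$, so $\Eh_{\uR}(\gamma) \le \E_{\uR}(\gamma) + N/8$ pointwise in $\gamma$; taking the infimum over $\gamma$ with $\tr\gamma \le N$ transports the bound to the energies. Combining with the rewritten hypothesis $E(N,Z) + N/8 = \eh_\le(N,Z)$ and subtracting $N/8$ throughout produces exactly $E(N,Z) \le \limsup_R(E_\le(N,Z,R) + U_R)$.

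The argument is essentially bookkeeping once Lemma~\ref{if} is in hand, so I do not expect a genuine obstacle. The one conceptual point worth flagging is why $E_\le$ (rather than $E$) must appear in the middle term: the separated trial states have total trace $N_1+N_2\le N$, not exactly $N$, so the relaxed problem is forced upon us; and the uniform slack $N/8$ between $\Eh$ and $\E$ on $\{\tr\gamma \le N\}$ is precisely what allows the passage from the $\Eh$-level statement of Lemma~\ref{if} down to the $\E$-level statement of the corollary.
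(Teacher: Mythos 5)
Your proof is correct and is exactly the bookkeeping the paper has in mind when it says the corollary ``immediately follows'' from Lemma~\ref{if}: the right inequality comes from re-running the trial-state construction of Lemma~\ref{if} at the level of $\E$ (the $\tr\gamma/8$ terms cancel because the trial state has trace exactly $N_1+N_2$), and the left inequality comes from Lemma~\ref{if} plus the comparison $\eh_\le(N,Z,R)\le E_\le(N,Z,R)+N/8$ together with the hypothesis $\eh_\le(N,Z)=E(N,Z)+N/8$. Your observations that the right inequality needs no hypothesis and that the relaxed energy $E_\le$ is forced by the subadditive trial states are both accurate.
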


We shall prove Lemma~\ref{if}.
The following lemma is obtained by the same proof in~\cite[Lemma 1]{Kehle2017}.

\begin{lem}
Let $Z \ge 0$, $N > 0$ and $\tr \gamma = N$. Then, for any $\epsilon > 0$ there exists a density-matrix $\sigma \in \mathcal{P}$ having a compactly supported integral kernel, $\tr \sigma = N$ and
\[
|\mathcal{E}_{R} (\gamma) - \mathcal{E}_{R} (\sigma)| \le \epsilon.
\]
\end{lem}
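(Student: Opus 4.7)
My plan is to approximate $\gamma$ in three stages: first replace it by a finite-rank density matrix, then localize the eigenfunctions to compact support, and finally correct the trace by adding a small auxiliary contribution supported far from the localized part.

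\emph{Stage 1 (finite rank).} Diagonalize $\gamma = \sum_{i\ge 1} \lambda_i |\phi_i\rangle\langle\phi_i|$ with $\{\phi_i\}$ orthonormal in $L^2(\R^3)$ and $0\le \lambda_i\le 1$. Since $(-\Delta+1)^{1/2}\gamma(-\Delta+1)^{1/2}\in\mathcal{S}^1$, one has $\sum_i \lambda_i(\|\nabla\phi_i\|_2^2+1) < \infty$, so every $\phi_i\in H^1(\R^3)$. Set $\gamma_M = \sum_{i=1}^M \lambda_i |\phi_i\rangle\langle\phi_i|$. As $M\to\infty$: $\tr \gamma_M\to N$ and $\tr(-\Delta)\gamma_M\to \tr(-\Delta)\gamma$ by monotone convergence; the external potential term converges by Hardy's inequality; the direct Coulomb term converges by Hardy--Littlewood--Sobolev applied to $\rho_\gamma-\rho_{\gamma_M}$; and the exchange term converges because $\gamma_M^{1/2}\to \gamma^{1/2}$ in Hilbert--Schmidt norm on $H^1$. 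Hence $|\E_R(\gamma)-\E_R(\gamma_M)|\to 0$.

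\emph{Stage 2 (localization).} Fix such an $M$. Choose $\chi\in C_c^\infty(\R^3)$ with $\chi\equiv 1$ on $B_1$, supported in $B_2$, $0\le\chi\le 1$, and let $\chi_r(x)=\chi(x/r)$. The cut-off functions $\chi_r\phi_i$ are compactly supported and tend to $\phi_i$ in $H^1$ as $r\to\infty$, so the Gram matrix $G^r_{ij}=\langle \chi_r\phi_i,\chi_r\phi_j\rangle$ tends to $I_M$. For $r$ large enough, $G^r$ is invertible and the L\"owdin-orthogonalized functions $\tilde\phi_i^r = \sum_j ((G^r)^{-1/2})_{ij}\, \chi_r\phi_j$ form an orthonormal system in $L^2$, supported in $B_{2r}$, with $\tilde\phi_i^r\to\phi_i$ in $H^1$. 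Put $\sigma_r^0 = \sum_{i=1}^M \lambda_i|\tilde\phi_i^r\rangle\langle\tilde\phi_i^r|\in\mathcal{P}$; its kernel is supported in $B_{2r}\times B_{2r}$, and $\E_R(\sigma_r^0)\to\E_R(\gamma_M)$ term by term as $r\to\infty$.

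\emph{Stage 3 (trace correction).} The defect $\delta_M = N-\sum_{i=1}^M\lambda_i$ is made arbitrarily small by increasing $M$. To restore $\tr\sigma=N$ exactly, pick $\xi_{r,\ell}\in C_c^\infty(\R^3)$ with $\|\xi_{r,\ell}\|_2=1$ supported in a thin shell about $|x|\sim 3r$, spread over a scale $\ell\gg r$ chosen so that the kinetic, potential, direct Coulomb and exchange contributions of $\delta_M|\xi_{r,\ell}\rangle\langle\xi_{r,\ell}|$ are each $o(1)$ as $\ell\to\infty$. Set $\sigma = \sigma_r^0 + \delta_M|\xi_{r,\ell}\rangle\langle\xi_{r,\ell}|$. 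Disjoint supports ensure $\sigma\in\mathcal{P}$ with $0\le\sigma\le 1$, $\tr\sigma=N$, and compactly supported kernel. Choosing first $M$, then $r$, then $\ell$ sufficiently large, the total error is at most $\epsilon$.

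\emph{Main difficulty.} The only delicate ingredient is the exchange term $X(\sigma^{1/2})$, because of its non-linear dependence on $\sigma$ through the square root. However, at each stage $\sigma$ (or the intermediate approximant) is finite rank with explicit orthonormal eigenfunctions, so $\sigma^{1/2}(x,y)=\sum_i \lambda_i^{1/2}\tilde\phi_i^r(x)\overline{\tilde\phi_i^r(y)}$, and the $H^1$-convergence of the $\tilde\phi_i^r$ combined with the standard bound $X(\gamma^{1/2}) \le C\,\tr((-\Delta+1)\gamma)$ yields the required continuity with no further work.
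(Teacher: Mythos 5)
Your proposal is correct in substance. The paper itself gives no proof of this lemma and merely cites it as Lemma~1 of Kehle~\cite{Kehle2017}, so there is no in-text argument to compare against; but your three-stage scheme (finite-rank truncation, L\"owdin-orthogonalized cutoff, far-away trace correction) is a sound and fairly standard way to establish exactly this kind of approximation statement for M\"uller-type functionals. The one step that actually requires care --- the continuity of $X$ under your approximations --- you handle correctly in spirit: since at each stage the approximant is simultaneously diagonal with the reference operator (Stage~1) or has explicit orthonormal eigenfunctions converging in $H^1$ (Stage~2), you can write $X$ as a quadratic form $\langle T,T\rangle_X$ on Hilbert--Schmidt kernels, use the Hardy/Cauchy--Schwarz bound $X(T)\le \tfrac14\|\nabla_x T\|_{\mathcal{S}^2}^2 + \tfrac14\|T\|_{\mathcal{S}^2}^2$ for arbitrary kernels (not just square roots of density matrices), and polarize to get $|X(A)-X(B)|\le \|A-B\|_X(\|A\|_X+\|B\|_X)$. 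The L\"owdin step is a clean way to keep the eigenvalues $\lambda_i\le 1$ after cutoff, which sidesteps the usual headache that $(\chi\gamma\chi)^{1/2}\ne\chi\gamma^{1/2}\chi$.

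Two small points worth tightening. First, the description of $\xi_{r,\ell}$ as living in a ``thin shell about $|x|\sim 3r$'' while being ``spread over a scale $\ell\gg r$'' is self-contradictory as written; what you need is simply a normalized $C_c^\infty$ function supported at distance $\ge r$ from $B_{2r}$, and then the kinetic, direct, and exchange contributions of $\delta_M|\xi\rangle\langle\xi|$ are all $O(\delta_M)$ with constants independent of $\ell$, so in fact you do not even need to send $\ell\to\infty$ once $M$ (hence $\delta_M$) has been chosen --- the disjointness of supports kills the cross terms in $X$ and bounds the cross term in $D$ by $\delta_M N/r$. Second, you should state explicitly that $\sigma$ built this way lies in $\mathcal{P}$: it is finite rank with eigenfunctions in $C_c^\infty\subset H^1$, and its eigenvalues are $\{\lambda_1,\dots,\lambda_M,\delta_M\}\subset[0,1]$ because of the disjoint supports, so $(-\Delta+1)^{1/2}\sigma(-\Delta+1)^{1/2}\in\mathcal{S}^1$ automatically.
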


\begin{proof}[Proof of Lemma \ref{if}]
It is trivial for $N_1 = 0$ (or equivalently, $N_2 = 0$).
Let $\epsilon > 0$, $N_i > 0$, $i=1, 2$, and $N_1 + N_2 \le N$.
We may assume
$\Eh_\mathrm{atom}(\gamma_i) \le \eh_\mathrm{atom}(N_i, Z_i) + \epsilon/3$,
$\tr\gamma_i =  N_i$, and each kernel of $\gamma_i$ is compactly supported in a ball with the radius $r > 0$.
Let $\widehat{\gamma_2}_{R} = \tau_{-R} \gamma_2 \tau_{R}$ with $\tau$ being the translation operator.
We then define a trial density-matrix by
\[
\gamma_{R} = \gamma_1 + \widehat \gamma_{2_R}.
\]
Clearly $0 \le \gamma \le 1$, $\tr \gamma \le N$, and $\gamma_1  \widehat \gamma_{2_R} = 0$ for large $R$, by construction.
Thus we can compute $X(\gamma_R^{1/2}) = X(\gamma_1^{1/2})$ + $X(\widehat \gamma_{2_R}^{1/2})$.
Furthermore, it is easy to see that
\begin{equation*}
2D[\rho_{\gamma_1}, \rho_{\widehat{\gamma_2}_R}] = \iint_{\R^3 \times\R^3} \,\frac{\rho_{\gamma_1}(x) \rho_{\widehat{\gamma_2}_R}(y)}{|x-y|} \, dx \, dy
\le \frac{N_1N_2}{R - 2r}.
\end{equation*}
Using the translation invariant of the functional $\E_\infty (\gamma)$, we may find
\begin{equation*}
\begin{split}
 \eh_{\le}(N, Z, R) +\frac{Z_1Z_2}{R}  
 &\le
\Eh_R(\gamma_R) +\frac{Z_1Z_2}{R} \\
&\le \sum_{i=1, 2}\Eh_\mathrm{atom}(\gamma_i) + 2D(\rho_{\gamma_1}, \rho_{\widehat{\gamma_2}_R}) + \frac{Z_1Z_2}{R} \\
&\le \sum_{i=1, 2}\eh_\mathrm{atom}(N_i, Z_i) + \frac{2\epsilon}{3} + \frac{N_1N_2}{R - 2r} + \frac{Z_1Z_2}{R},
\end{split}
\end{equation*}
for sufficiently large $R> 0$.
Hence for any given $\epsilon > 0$ and $N_1 + N_2 \le N$, it holds that

\begin{align*}
\limsup _{R\to \infty} &\left(\eh_\le(N, Z, R) +\frac{Z_1Z_2}{R}\right) \\
 &\le \eh_\mathrm{atom}(N_1, Z_1) + \eh_\mathrm{atom} (N_2, Z_2) +  \epsilon,
\end{align*}
which shows (\ref{diatomicineq}).
\end{proof}

Lemma \ref{if} implies that if any minimizing sequence $(R_n)_n$ for (\ref{diatomic}) is bounded, then the binding inequality (\ref{binding}) holds.
Indeed, suppose
$\eh(N, Z) = \eh_\mathrm{atom}(N_1, Z_1) + \eh_\mathrm{atom} (N_2, Z_2)$ for some $N_1 + N_2 \le N$. Then, by Lemma \ref{if}, $\lim_{R \to \infty} (\eh_\le(N, Z, R) + U_R) = \eh(N, Z)$. 
This contradicts to the assumption that any minimizing sequence is bounded.
Hence, the `only if' part of Theorem \ref{iff} is followed.

\begin{proof}[Proof of Theorem \ref{iff}]
It suffices to show  the `if' part.
We suppose that there is a minimizing sequence $(R_n)_n$ for $\eh_{\le}(N, Z)$ so that $R_n \to \infty$.
Then we may assume that there exist density-matrices $\gamma_n \in \mathcal{P}$ so that $\Eh_{R_n}(\gamma) + U_{R_n} \to \eh_{\le}(N, Z)$ as $n \to \infty$.
Using the hydrogen bound, it follows that
\[
\tr Z_j|x-R_j|^{-1} \gamma \le \frac{Z_j \epsilon}{4Z} \tr (-\Delta) \gamma + \frac{Z_jZ}{\epsilon} \tr \gamma,
\]
for any positive number $\epsilon >0$.
Hence $\tr V_R \gamma \le \epsilon/4 \tr(-\Delta \gamma) + Z^2/\varepsilon \tr \gamma$, for any $\epsilon > 0$.
Moreover,  the hydrogen bound also implies that

\begin{lem}[Lemma 1 of~\cite{frank2007muller}]
\label{exchangeineq}
For any $\epsilon >0$ it holds that
\begin{equation*}
X(\gamma^{1/2}) \le \frac{\epsilon}{4} \tr (- \Delta \gamma) + \frac{1}{4 \epsilon} \tr \gamma.
\end{equation*}
\end{lem}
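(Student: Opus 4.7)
The plan is to reduce this to the one-body operator inequality
\[
\frac{1}{|x|} \le \lambda(-\Delta) + \frac{1}{4\lambda}, \qquad \lambda > 0,
\]
and then apply it in one of the two variables of the integral kernel $\gamma^{1/2}(x,y)$. The operator bound is just a rewriting of the hydrogen ground-state inequality $-\tfrac{1}{2}\Delta - Z/|x| \ge -Z^2/2$: dividing by $Z$ gives $1/|x| \le -(1/2Z)\Delta + Z/2$, and setting $Z = 1/(2\lambda)$ yields the displayed form. By translation invariance, the same inequality holds with $|x|$ replaced by $|x-y|$ and $\Delta$ by $\Delta_x$, for any fixed $y\in\R^3$.

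Next I would write the exchange term using the integral kernel. Set $K(x,y)\coloneqq \gamma^{1/2}(x,y)$, which is the kernel of a positive Hilbert--Schmidt operator with $K^2 = \gamma$. Two standard identities that I want to use are
\[
\tr\gamma = \iint_{\R^3\times\R^3} |K(x,y)|^2\, dx\, dy,
\qquad
\tr(-\Delta\gamma) = \iint_{\R^3\times\R^3} |\nabla_x K(x,y)|^2\, dx\, dy;
\]
the second one follows from $\tr(-\Delta\gamma) = \tr(K(-\Delta)K)$ and expanding the trace against any complete orthonormal basis of $L^2(\R^3)$ in the second argument.

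With these in hand, for each fixed $y$ I regard $\phi_y(\cdot) = K(\cdot,y)$ as an $L^2$ function and apply the operator inequality above in the form
\[
\int_{\R^3} \frac{|\phi_y(x)|^2}{|x-y|}\, dx
\le \lambda \int_{\R^3} |\nabla\phi_y(x)|^2\, dx + \frac{1}{4\lambda}\int_{\R^3}|\phi_y(x)|^2\, dx.
\]
Integrating in $y$ and using the two kernel identities gives
\[
2X(\gamma^{1/2}) = \iint \frac{|K(x,y)|^2}{|x-y|}\, dx\, dy \le \lambda\, \tr(-\Delta\gamma) + \frac{1}{4\lambda}\,\tr\gamma.
\]
Finally, choosing $\lambda = \epsilon/2$ and dividing by $2$ produces exactly
\[
X(\gamma^{1/2}) \le \frac{\epsilon}{4}\tr(-\Delta\gamma) + \frac{1}{4\epsilon}\tr\gamma,
\]
as desired.

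No step looks like a real obstacle; the only technical point worth being careful about is the representation of $\tr(-\Delta\gamma)$ in terms of $K$, which requires that $\gamma^{1/2}$ maps $L^2$ into $H^1$ (guaranteed by $\gamma\in\mathcal{P}$, so that $(-\Delta)^{1/2}\gamma^{1/2}$ is Hilbert--Schmidt) in order to justify the interchange of trace and integration.
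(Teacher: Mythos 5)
Your proof is correct and uses exactly the argument the paper points to: the text immediately before the lemma says ``the hydrogen bound also implies that\ldots'' and cites Lemma~1 of Frank--Lieb--Seiringer--Siedentop, whose proof is precisely the one you give --- apply the form inequality $|x-y|^{-1}\le\lambda(-\Delta_x)+\tfrac{1}{4\lambda}$ slicewise to $\phi_y=\gamma^{1/2}(\cdot,y)$, integrate in $y$, and use the kernel identities for $\tr\gamma$ and $\tr(-\Delta\gamma)$. The algebra (factor of $2$ from the $\tfrac12$ in the definition of $X$, and the choice $\lambda=\epsilon/2$) checks out, and your remark about $(-\Delta+1)^{1/2}\gamma^{1/2}$ being Hilbert--Schmidt is the right way to justify that $\phi_y\in H^1(\R^3)$ for a.e.\ $y$ and that the slicewise application of the form bound and the Fubini interchange are legitimate.
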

Now we get the following bound as~\cite[Equation (57)]{frank2007muller}:
\begin{equation*}
\frac{1}{2} (1- \epsilon)\tr (- \Delta)\gamma_n
\le \Eh_{R_n}(\gamma_n) + U_{R_n} + \frac{1}{\epsilon} \left(Z^2 + \frac{1}{4} \right) \tr \gamma_n.
\end{equation*}
Hence $(-\Delta +1)^{1/2} \gamma_n (-\Delta +1)^{1/2}$ is bounded in $\mathcal{S}^1$, and thus, by the Banach-Alaoglu theorem, after passing to a subsequence if necessary we may assume that $\tr K \gamma_n \to \tr K \gamma$ for some $\gamma$ and for any operator $K$ such that $(-\Delta +1)^{1/2} K (-\Delta +1)^{1/2}$ is compact.
In particular, for any function $f \in L^p(\R^3)$ $(3/2 \le p < \infty)$
\begin{equation*}
\int_{\R^3} f(x) \rho_{\gamma_n}(x) \,dx
 = \tr f \gamma_n
\to \tr f \gamma
= \int_{\R^3} f (x) \rho_\gamma(x) \,dx.
\end{equation*}
We note that $0 \le \gamma \le 1$ and
\begin{equation}
\label{semicont}
M = \tr \gamma \le \liminf_{n \to \infty} \tr \gamma_n = \widetilde N \le N
\end{equation}
by the lower-semicontinuity of the $\mathcal{S}^1$ norm.

We see  $\gamma \not \equiv  0$ from~\cite[Proposition 1]{frank2007muller}.
In fact, for some $\delta > 0$
\begin{equation*}
\eh_\mathrm{atom}(N, Z_1) \le -\delta.
\end{equation*}
From Lemma \ref{if},
\[
\limsup _{R\to \infty} \eh_{R}(N, Z) \le \eh_\mathrm{atom}(N, Z_1).
\]
Thus, 
$\Eh_{R_n}(\gamma_n) + U_{R_n} \le -\epsilon$ for some $\epsilon > 0$ and sufficiently large $n$.
Hence, we have
\[
-\epsilon \ge \Eh_{R_n}(\gamma_n) + U_{R_n}  \ge - \tr V_{R_n} \gamma_n,
\]
and thus
\[
\tr V_{R_n} \gamma_n \ge \epsilon,
\]
where $V_{R_n} = Z_1|x|^{-1} + Z_2|x-R_n \hat e|^{-1}$.
Thus $\gamma \not \equiv 0$.

If $M = \widetilde N$, then $\lim_{n \to \infty}\tr \gamma_n = \tr \gamma$. Thus $\gamma_n \to \gamma$ as $n \to \infty$ in $\mathcal{S}^1$ by ~\cite[Theorem A.6]{Simon1979T}.
Then 
\begin{equation}
\label{infty}
\int_{\R^3} \rho_{\gamma_n} (x)|x-R_n \hat e|^{-1} dx \to 0
\end{equation}
by $R_n \to \infty$.
Indeed, we may split
\begin{equation*}
\begin{split}
 \int_{\R^3} \frac{\rho_{\gamma_n}(x)}{|x-R_n \hat e|} \,dx
 &= \int_{\R^3}\left( \frac{\rho_{\gamma_n}(x) - \rho_{\gamma}(x)}{|x-R_n \hat e|} +\frac{\rho_{\gamma}(x)}{|x-R_n \hat e|}\right)  \,dx.
\end{split}
\end{equation*}
We see that the second term converges to $0$ by Young's inequality.
For the first term, we split $\rho_{\gamma_n}(x) - \rho_{\gamma}(x) = (\sqrt{\rho_{\gamma_n}(x)} + \sqrt{\rho_{\gamma}(x)} )(\sqrt{\rho_{\gamma_n}(x)} - \sqrt{\rho_{\gamma}(x)} )$.
We know that $\sqrt{\rho_{\gamma_n}} \to \sqrt{\rho_{\gamma}}$ strongly in $L^2(\R^3)$ by $\gamma_n \to \gamma$ in $\mathcal{S}^1$, and thus the first term also converges to $0$.

From the lower-semicontinuity of our functionals (see~\cite[Proposition 3]{frank2007muller}),
we have
\begin{equation*}
\label{contra}
\eh_{\le}(N, Z) \ge 
\liminf_{n \to \infty} \Eh_\mathrm{atom}(\gamma_n) \ge
\Eh_\mathrm{atom}(\gamma) \ge
\eh_\mathrm{atom}(\widetilde N, Z_1) \ge \eh_{\le}(N, Z),
\end{equation*}
and thus $\eh_{\le}(N, Z) = \eh_\mathrm{atom}(\widetilde N, Z_1)$ with $\widetilde N \le N$.
Then we have finished the proof in this case.

Let 
\[
(\chi^0)^2 + (\chi^1)^2 = 1
\]
with $\chi^0 \in C^{\infty}(\R^3)$, radial, $\chi^0(0)=1$, $\chi^0(r) < 1$ if $r > 0$, $\chi^0(r) = 0$ if $r \ge2$.
For each $j$ $\tr (\chi^0(|x|/L) )^2 \gamma_j$ is a continuous function of $L> 0$ which increases from $0$ to $\tr \gamma_j$.
Now $\tr \gamma_j > M$ for large $j$, and thus we can choose $L_j$ such that $\tr \gamma_j^0 \coloneqq \tr (\chi^0 (|x|/L_j))^2\gamma = M$, $L_j \to \infty$, and then $\gamma_j^0 \to \gamma$ in $\mathcal{S}^1$.
We write $\chi^\nu_j (x/L_j) \coloneqq \chi^\nu (|x|/L_j)$ and $\gamma_j^\nu = \chi^\nu_j \gamma_j \chi_j^\nu$ for each $\nu = 0, 1$.

From the IMS formula,
\[
\tr (-\Delta \gamma_n) = \sum_{\nu = 0, 1} \left[\tr (-\Delta \gamma_N^\nu) - 
\tr |\nabla\chi_n^\nu|^2 \gamma_n\right].
\]
Clearly,
\[
D[\rho_{\gamma_j}] = D[\rho_{\gamma_j^0}] + D[\rho_{\gamma_j^1}]+  2D(\rho_{\gamma_j^0}, \rho_{\gamma_j^1})
\ge D[\rho_{\gamma_j^0}] + D[\rho_{\gamma_j^1}]
\]
since $\rho^\nu_{\gamma_j} \ge 0$.
For the potential term, we learn
\[
\tr (|x|^{-1}\gamma_n) = \tr (|x|^{-1}\gamma_n^0) + o(1)
\]
and
\[
\tr (|x-R_n \hat e|^{-1}\gamma_n) = \tr (|x-R_n \hat e|^{-1}\gamma_n^1) + o(1),
\]
because $R_n \to \infty$ as (\ref{infty}).
For the exchange term,  we have $X(\gamma_j^{1/2}) \le X(({\gamma_j^0})^{1/2}) + X(({\gamma_j^1})^{1/2}) + o (1)$ as~\cite{frank2007muller}.
Let $\widetilde \gamma_n = \tau_{-R_n \hat e} \gamma_n^1 \tau_{R_n \hat e}$.
It is clear that $\tr \widetilde \gamma_n = K - M$ with some $K \le N$.
By the translation invariant for the functional $\E_\infty(\gamma)$, we have
\begin{equation*}
\begin{split}
\label{conclusion}
\Eh_{R_n}(\gamma_n) + U_{R_n} &\ge \Eh_\mathrm{atom}(\gamma_n^0) + \Eh_\mathrm{atom}(\widetilde \gamma_n) + o(1) \\
&\ge \Eh_\mathrm{atom}(\gamma_n^0) + \eh_\mathrm{atom}(K-M; Z_2) + o(1) .
\end{split}
\end{equation*}
Hence, again by the lower-semicontinuity, we arrive at
\begin{equation*}
\begin{split}
\eh_{\le}(N, Z) &\ge \liminf_{n \to \infty}\Eh_\mathrm{atom}(\gamma_n^0) + \eh_\mathrm{atom}(K-M, Z_2) \\
&\ge \Eh_\mathrm{atom}(\gamma) + \eh_\mathrm{atom}(K-M, Z_2).
\end{split}
\end{equation*}
Thus $\eh_{\le}(N, Z)  \ge \eh_\mathrm{atom}(M, Z_1) + \eh_\mathrm{atom}(K-M, Z_2)$ with $K \le N$.
Using Lemma \ref{if}, we have the theorem.
\end{proof}

We recall $E_\infty(N) = -N/8$.
The next theorem which corresponds to Theorem~\ref{reduce1} follows.

\begin{thm}
\label{dired1}
We assume $\eh_{\le}(N, Z) = E(N, Z) +N/8$. 
Then, any minimizing sequence for (\ref{inf}) is bounded if and only if

\begin{equation}
\label{dired2}
E(N, Z) < E_\mathrm{atom}(N_1, Z_1) + E_\mathrm{atom} (N_2, Z_2)
\end{equation}
for all $N_1 + N_2 = N$, $0 \le N_i$, $i=1, 2$.
\end{thm}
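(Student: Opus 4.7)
By construction $\eh(N,Z) = E(N,Z) + N/8$, so the hypothesis $\eh_\le(N,Z) = E(N,Z) + N/8$ is equivalent to $\eh(N,Z) = \eh_\le(N,Z)$, placing us exactly in the setting of Corollary~\ref{ifcor}. Our plan is to reduce both directions of Theorem~\ref{dired1} to their counterparts for the relaxed problem, namely Theorem~\ref{iff} and Corollary~\ref{ifcor}, while carefully tracking the $N/8$ shifts.

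For the only-if direction, Corollary~\ref{ifcor} yields $E(N,Z) \le E_\mathrm{atom}(N_1,Z_1) + E_\mathrm{atom}(N_2,Z_2)$ for every pair with $N_1+N_2 \le N$. Suppose for contradiction that equality held for some splitting with $N_1+N_2 = N$. We would then mimic the construction in the proof of Lemma~\ref{if}: pick compactly supported $\epsilon$-optimizers $\gamma_i$ of trace $N_i$ for the atomic problems and set $\gamma_R = \gamma_1 + \tau_{-R\hat e}\gamma_2\tau_{R\hat e}$. Because $\tr\gamma_R = N_1+N_2 = N$ exactly, $\gamma_R$ is admissible for the constrained problem, and the cross-potential, direct, and exchange estimates of Lemma~\ref{if} give $\E_R(\gamma_R) + U_R \to E_\mathrm{atom}(N_1,Z_1) + E_\mathrm{atom}(N_2,Z_2) = E(N,Z)$ as $R\to\infty$. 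A diagonal selection then yields a minimizing sequence for (\ref{inf}) with $R_n \to \infty$, contradicting the boundedness hypothesis; hence (\ref{dired2}) must be strict.

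For the if direction, we upgrade (\ref{dired2}) to the strict binding inequality (\ref{binding}) required by Theorem~\ref{iff}. Using $\eh_\le(N,Z) = E(N,Z) + N/8$ and $\eh_\mathrm{atom}(N_i,Z_i) = E_\mathrm{atom}(N_i,Z_i) + N_i/8$, (\ref{binding}) is equivalent to
\[
E(N,Z) < E_\mathrm{atom}(N_1,Z_1) + E_\mathrm{atom}(N_2,Z_2) + \frac{N_1+N_2-N}{8}.
\]
For $N_1+N_2 = N$ this is precisely (\ref{dired2}); for $N_1+N_2 < N$ we need the monotonicity $\eh_\mathrm{atom}(N',Z) \le \eh_\mathrm{atom}(N,Z)$ whenever $N' \ge N$, which then lets us replace $N_1$ by $N_1 + (N - N_1 - N_2)$ and reduce to the equality case. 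This monotonicity is the step we expect to be the main obstacle; we plan to prove it by adding mass at infinity: glue an $\epsilon$-minimizer of trace $N$ to a distant translate of a density matrix of trace $N' - N$ whose $\E_\infty$-energy approaches $-(N'-N)/8$, so that its contribution to $\Eh$ vanishes in the limit and the $1/8$ shift from the added trace is offset exactly. Once (\ref{binding}) is established, Theorem~\ref{iff} provides bounded minimizing sequences for the relaxed problem, and any minimizing sequence for (\ref{inf}) is automatically one for $\eh_\le$ under the theorem's hypothesis, so it is also bounded.
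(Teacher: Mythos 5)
Your proof is correct, but it takes a different route from the paper's very compressed argument. The paper simply observes that when $\eh_\le(N,Z) = E(N,Z)+N/8$, the trial density matrices associated with a minimizing sequence for (\ref{inf}) have trace exactly $N$, so in the splitting argument from the proof of Theorem~\ref{iff} the total split trace $K$ is forced to equal $N$; the resulting contradiction then needs (\ref{binding}) only for $N_1+N_2 = N$, which is trivially equivalent to (\ref{dired2}) because the $1/8$-shifts cancel exactly. You instead leave Theorem~\ref{iff} untouched and upgrade (\ref{dired2}) for $N_1+N_2=N$ to the full binding inequality (\ref{binding}) for all $N_1+N_2\le N$, via the monotonicity $\eh_\mathrm{atom}(N',Z)\le\eh_\mathrm{atom}(N,Z)$ for $N'\ge N$. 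That monotonicity is genuine and provable exactly as you sketch: it is the statement $E_\mathrm{atom}(N')\le E_\mathrm{atom}(N)+E_\infty(N'-N)$ with $E_\infty(N'-N)=-(N'-N)/8$, the same mass-at-infinity trick already used in Proposition~\ref{ifpart} and in Section~5. Your only-if direction is a correct direct construction of a divergent minimizing sequence of trace $N$, mirroring Lemma~\ref{if} and matching what the paper implicitly invokes through Corollary~\ref{ifcor}. The trade-off is transparency versus economy: your version uses Theorem~\ref{iff} as a black box at the cost of an extra (easy) monotonicity lemma, whereas the paper's version avoids the lemma by reopening the proof of Theorem~\ref{iff} and noting $K=N$. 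One small imprecision worth fixing: the limit $\E_R(\gamma_R)+U_R \to E_\mathrm{atom}(N_1,Z_1)+E_\mathrm{atom}(N_2,Z_2)$ holds only up to the $2\epsilon$ error of the near-optimizers; the diagonal selection over $\epsilon\to 0$ and $R\to\infty$ that you mention is therefore essential and should be stated as such.
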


\begin{proof}[Proof of Theorem~\ref{dired1}]
In the proof of the previous theorem, we may take $K = N$ when $\eh_{\le}(N, Z) = E(N, Z) +N/8$.
Thus molecules are stable if and only if (\ref{binding}) holds for all $N_1 + N_2 = N$.
Then, the binding inequalities (\ref{dired2}) and (\ref{binding}) are equivalent for $N_1 + N_2 = N$.
\end{proof}

\section{General case}
First, we show the following proposition.

\begin{prop}
\label{ifpart}
It is always the case
 \begin{equation*}
\eh_{\le}(N, \uZ) \le \eh_{\le}(N_1, \underline{Z_1}) + \eh_{\le}(N_2, \underline{Z_2})
\end{equation*}
for all $N_i \ge 0$, $i=1, 2$, such that $N_1+N_2\le N$.
\end{prop}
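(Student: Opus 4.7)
The plan is to generalize the trial-state construction of Lemma \ref{if} to the multi-center setting. Fix $\epsilon>0$ and split the nuclear configuration into the two prescribed groups $\underline{Z_1}$, $\underline{Z_2}$. For each sub-system $i=1,2$, choose nuclear positions $\underline{R^{(i)}}$ and a density matrix $\gamma_i \in \mathcal{P}$ with $\tr \gamma_i \le N_i$ such that
\[
\Eh_{\underline{R^{(i)}}}(\gamma_i) + U_{\underline{R^{(i)}}} \le \eh_{\le}(N_i, \underline{Z_i}) + \tfrac{\epsilon}{3}.
\]
Applying the compact-support approximation lemma (the analogue of the cited Kehle lemma), I may further assume, at the cost of another $\epsilon/3$, that each kernel $\gamma_i(x,y)$ is supported in a ball $B(0,r)$, and likewise that the nuclei $\underline{R^{(i)}}$ lie in this ball.

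Next, pick a unit vector $\hat e \in \R^3$ and a separation parameter $L \gg r$. Let $\widehat{\gamma_2}_L = \tau_{-L\hat e}\,\gamma_2\,\tau_{L\hat e}$ denote the translate of $\gamma_2$ by $L\hat e$, and set
\[
\gamma_L = \gamma_1 + \widehat{\gamma_2}_L, \qquad \underline{R} = \bigl(\underline{R^{(1)}},\, \underline{R^{(2)}} + L\hat e\bigr).
\]
For $L > 2r$ the two summands have disjoint supports; hence $\gamma_L \in \mathcal{P}$ with $0 \le \gamma_L \le 1$ and $\tr \gamma_L = \tr\gamma_1 + \tr \gamma_2 \le N$, so $\gamma_L$ is admissible for $\eh_{\le}(N, Z, \underline{R})$. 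The same disjointness yields the exact splitting $X(\gamma_L^{1/2}) = X(\gamma_1^{1/2}) + X(\widehat{\gamma_2}_L^{1/2})$, while translation invariance of $\E_\infty$ reduces the latter to $X(\gamma_2^{1/2})$, and similarly for the kinetic term.

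It then remains to control the interaction between the two clusters. Expanding $\Eh_{\underline{R}}(\gamma_L) + U_{\underline R}$ and subtracting $\sum_i \bigl[\Eh_{\underline{R^{(i)}}}(\gamma_i) + U_{\underline{R^{(i)}}}\bigr]$, the cross terms are: the direct Coulomb term $2D(\rho_{\gamma_1}, \rho_{\widehat{\gamma_2}_L}) \le N_1 N_2/(L-2r)$; the cross external-potential terms, each bounded by $Z\cdot N/(L-2r)$ by the same support argument as in Lemma \ref{if}; and the inter-cluster nuclear-nuclear repulsion $\sum_{i,j} Z_i^{(1)} Z_j^{(2)}/|R_i^{(1)} - R_j^{(2)} - L\hat e|$, which is $O(1/L)$. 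All of these tend to $0$ as $L\to \infty$. Therefore
\[
\eh_{\le}(N,\uZ) \le \limsup_{L\to\infty}\bigl(\Eh_{\underline R}(\gamma_L) + U_{\underline R}\bigr) \le \sum_{i=1,2} \eh_{\le}(N_i,\underline{Z_i}) + \epsilon,
\]
and letting $\epsilon \to 0$ gives the proposition.

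The only non-mechanical point is verifying that the compact-support approximation lemma, proved for a single nuclear cluster, still applies: this is fine because it modifies only $\gamma$ and uses continuity of $\Eh_{\uR}$ under cut-offs, which is independent of the number of centers. The remaining work is pure bookkeeping of cross terms, each decaying at least like $1/(L-2r)$ by disjointness of supports, so no essential difficulty arises.
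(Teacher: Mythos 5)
Your proposal is correct and follows essentially the same route as the paper's proof: choose near-minimizing configurations and density matrices for the two subsystems, approximate the density matrices by ones with compactly supported kernels, translate the second cluster far away, and superpose. The paper's proof of Proposition \ref{ifpart} is terser (it simply invokes the diatomic construction of Lemma \ref{if} and does not spell out the cross terms), whereas you explicitly itemize the inter-cluster direct Coulomb, cross external-potential, and nuclear-nuclear contributions and note that each decays like $1/L$; this is exactly the bookkeeping the paper leaves implicit, so the two arguments are the same in substance.
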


\begin{proof}[Proof of Proposition \ref{ifpart}]
Let $\epsilon > 0$.
As the proof of Lemma \ref{if}, we can take $\gamma_i^n$ and $\underline{R_i^n}$, $i=1, 2$, such that
\[
\Eh_{\underline{R_i^n}} (\gamma_i^n) + U_{\underline{R_i^n}} \le \eh_{\le}(N_i, \underline{Z_i}) + \frac{1}{n}, 
\]
$i=1, 2$. Moreover, we may assume that each kernel of $\gamma_i^n$ has a compact support in a ball.
Let $\hat \gamma_2^n = \tau_{-B_n} \gamma_2^n \tau_{B_n}$, with $B_n \in \R^3$.
We define $\gamma^n = \gamma_1^n + \hat \gamma_2^n$ as the diatomic case.
Then, for $\uR^n = (R_{j(1)}^n, \dots, R_{j(p)}^n, R_{j(p+1)}^n +B_n, R_{j(p+2)}^n +B_n, \dots, R_{j(K)}^n + B_n)$ with large $|B_n|$,
\begin{align*}
\eh_{\le}(N, \uZ) &\le \eh_{\uR^n}(N, Z) + U_{\uR^n}
\le \Eh_{\uR^n}(\gamma^n) +U_{\uR^n}\\
&\le \eh_{\le}(N_1, \underline{Z_1}) + \eh_{\le}(N_2, \underline{Z_2}) + \epsilon
\end{align*}
for sufficiently large $n$.
\end{proof}

\begin{rem}
It immediately follows that the `only if' part of Theorem \ref{general} by this proposition.
Suppose that any minimizing sequence for (\ref{relax}) is bounded in $\R^{3K}$.
If  $\eh_{\le}(N, \uZ) = \eh_{\le}(N_1, \underline{Z_1}) + \eh_{\le}(N_2, \underline{Z_2})$ for some configuration, then the above $\uR^n$ is a minimizing sequence and clearly not bounded.
\end{rem}

\begin{proof}[Proof of Theorem \ref{generaliff}]
We only show the `if' part by contradiction.
Let $\Eh_{\uR^n}(\gamma_n) + U_{\uR^n} \to \eh_{\le}(N, \uZ)$ and suppose this $\uR^n$ is not bounded.
As the proof of the di-atomic case, we may assume $\gamma_n \to \gamma \not \equiv 0$ in a sense, and the relation (\ref{semicont}) holds.
If $\tr \gamma = M = \widetilde N$, then $\gamma_n \to \gamma$ in $S^1$.
Then, after passing by subsequence if necessary,
\begin{equation*}
\eh_{\le}(N, \uZ) \ge \liminf_{n\to \infty} (\Eh_{\uR^n}(\gamma_n) + U_{\uR^n}) \ge \Eh_{\uR}(\gamma),
\end{equation*}
where $\uR \in \R^{3(K - L)}$, $L$ is the number of $i$ such that $|R_i^n| \to \infty$.
Hence $\eh_{\le}(N, \uZ) = \eh(\widetilde N, \widetilde \uZ)$ with $\widetilde N \le N$ and thus $\eh_{\le}(N, \uZ) \ge \eh_{\le}(N, \widetilde \uZ)$.

Next, we consider the case of $M < \widetilde N$.
We may split $\gamma_n = \gamma_n^0 + \gamma_n^1$, $\gamma_n^0 \to \gamma$ in $S^1$.
Let $J = \{j \colon R_j^n \text{ remain bounded} \}$.
If $J = \emptyset$, passing to a subsequence if necessary, we may $|R_j^n| \to \infty$ for all $j$.
Then, 

\begin{equation*}
\tr (|x-R_j^n|^{-1} \gamma_n) = \tr (|x-R_j^n|^{-1} \gamma_n^1) + o(1).
\end{equation*}
as the same reason of (\ref{infty}).
Thus we get
\begin{equation*}
\begin{split}
\Eh_{\uR^n}(\gamma_n) + U_{\uR^n} &\ge \Eh_\infty(\gamma_n^0) + \Eh_{\uR^n}(\gamma_n^1) + U_{\uR_n}+ o(1) \\
&\ge \Eh_\infty(\gamma_n^0) + \eh(K-M, \uZ) + o(1) .
\end{split}
\end{equation*}
Thus
\begin{equation*}
\eh_{\le}(N, \uZ) \ge \eh_\infty(M) + \eh(K - M, \uZ).
\end{equation*}

If $J \neq \emptyset$, then, by passing to a subsequence if necessary, we may assume that $R_j^n  \to R_j$ for $j \in J$ and $|R_i| \to \infty$ for $i \not \in J$.
Then, for $j \in J$ we see that
\begin{equation*}
\tr (|x-R_j^n|^{-1} \gamma_n) = \tr (|x-R_j|^{-1} \gamma_n^0) + o(1).
\end{equation*}
For $j \not \in J$,
\begin{equation*}
\tr (|x-R_j^n|^{-1} \gamma_n) = \tr (|x-R_j^n|^{-1} \gamma_n^1) + o(1).
\end{equation*}
Hence we arrive at
\begin{equation*}
\eh_{\le}(N, Z) \ge \eh(M, \underline{Z_1}) + \eh(K - M, \underline{Z_2}),
\end{equation*}
where $\underline{Z_1} = (Z_j)_{ j \in J}$ and $\underline{Z_2} = (Z_j )_{j \not \in J}$. 
This completes the proof.
\end{proof}

We now turn to the

\begin{proof}[Proof of Theorem \ref{reduce1}]
If $\eh_\le(N, Z) = E_\le(N, Z) + N/8$, then we can take $K=N$ in the above proofs.
Therefore, any minimizing sequence is bounded if and only if the binding condition (\ref{general}) holds for all $N_1 + N_2 = N$.
For $N_1 + N_2 = N$ the conditions (\ref{general}) and (\ref{genreduce}) are equivalent.
Thus Theorem \ref{reduce1} follows.
\end{proof}

\section{A lower bound on the size of molecules}
\label{size}
In this section we prove the estimate (\ref{th:radius}) in Theorem \ref{th:ionization}.
First, we use the united atom bound for M\"uller theory.

\begin{prop}[United atom bound]
For any $N >0$ and for any configuration $\uR \in \R^{3K}$ we have
\[
E_{\uR}(N, Z) \ge E_\mathrm{atom}(N, Z).
\]
\end{prop}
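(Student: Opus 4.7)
The plan is to exploit the translation invariance of the kinetic, direct Coulomb, and Müller exchange pieces of $\E_{\uR}$, combined with the elementary inequality $\sum_i Z_i a_i \le \bigl(\sum_i Z_i\bigr)\sup_i a_i$ applied to the nuclear attraction term, in order to compare an arbitrary molecular trial state against an atomic one. Concretely, I would fix any admissible $\gamma\in\mathcal{P}$ with $\tr\gamma=N$ and introduce the continuous function
\[
f_\gamma(R) := \int_{\R^3}\frac{\rho_\gamma(x)}{|x-R|}\,dx,\qquad R\in\R^3.
\]
Because $\rho_\gamma\in L^1(\R^3)$ with $\|\rho_\gamma\|_1=N$, and because $\rho_\gamma$ inherits further integrability from $(-\Delta+1)^{1/2}\gamma(-\Delta+1)^{1/2}\in\mathcal{S}^1$ (via Hoffmann-Ostenhof and Sobolev, $\rho_\gamma\in L^1\cap L^3$), the convolution $f_\gamma=\rho_\gamma*|\cdot|^{-1}$ is continuous on $\R^3$ and vanishes at infinity, hence attains its supremum at some point $R_0^\star\in\R^3$.

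The first key step is the pointwise bound on the nuclear attraction,
\[
\tr V_{\uR}\,\gamma \;=\; \sum_{i=1}^K Z_i\, f_\gamma(R_i) \;\le\; Z\,\sup_{R\in\R^3} f_\gamma(R) \;=\; Z\,f_\gamma(R_0^\star),
\]
which is immediate from $Z_i\ge 0$ and $\sum_i Z_i=Z$. The second key step is to translate: set $\tilde\gamma := \tau_{-R_0^\star}\gamma\,\tau_{R_0^\star}$. Then $\tilde\gamma\in\mathcal{P}$, $\tr\tilde\gamma=N$, and $\rho_{\tilde\gamma}(x)=\rho_\gamma(x+R_0^\star)$, so a change of variable gives $\int \rho_{\tilde\gamma}(x)/|x|\,dx = f_\gamma(R_0^\star)$. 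Since $\tr\bigl(-\tfrac12\Delta\bigr)\gamma$, $D[\rho_\gamma]$, and $X(\gamma^{1/2})$ are all invariant under the conjugation $\gamma\mapsto \tau_{-R_0^\star}\gamma\,\tau_{R_0^\star}$, one deduces
\[
\E_{\uR}(\gamma) - \E_\mathrm{atom}(\tilde\gamma) \;=\; Z\,f_\gamma(R_0^\star) - \sum_{i=1}^K Z_i\, f_\gamma(R_i) \;\ge\; 0,
\]
where $\E_\mathrm{atom}$ denotes the Müller functional with the single attractive potential $Z/|x|$. Taking the infimum over $\gamma\in\mathcal{P}$ with $\tr\gamma=N$ on both sides then yields $E_{\uR}(N,Z)\ge E_\mathrm{atom}(N,Z)$, as required.

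The only non-trivial technical point is the regularity step ensuring that $f_\gamma$ is continuous and decays at infinity so that its supremum is attained, but this is truly minor given the trace-class bound built into $\mathcal{P}$. If one wishes to bypass the existence of $R_0^\star$ altogether, it suffices to choose $R_\epsilon$ with $f_\gamma(R_\epsilon)\ge\sup_R f_\gamma(R)-\epsilon$, run the same argument with $R_0^\star$ replaced by $R_\epsilon$, and let $\epsilon\downarrow 0$ at the end.
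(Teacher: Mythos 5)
Your proof is correct, but it takes a genuinely different route from the paper's. The paper exploits the algebraic identity
\[
\E_{\uR}(\gamma) = \sum_{j=1}^K \frac{Z_j}{Z}\left[\tr\left(-\tfrac12\Delta - Z|x-R_j|^{-1}\right)\gamma + D[\rho_\gamma] - X(\gamma^{1/2})\right],
\]
writing the molecular functional as a convex combination (with weights $Z_j/Z$) of atomic functionals each carrying the \emph{full} charge $Z$ but centered at $R_j$; by translation invariance each bracketed term is $\ge E_\mathrm{atom}(N,Z)$, and the weighted sum then gives the bound with no analysis at all. You instead bound the attraction term by concentrating all the nuclear charge at the single point $R_0^\star$ maximizing $f_\gamma(R)=\rho_\gamma*|\cdot|^{-1}(R)$, and then translate $\gamma$ so that this best center sits at the origin. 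Both arguments are short and elementary and arrive at the same conclusion; the paper's averaging identity has the slight advantage of being purely algebraic (no continuity or supremum-attainment discussion is needed), whereas your version requires a regularity remark on $f_\gamma$ — which you correctly justify and for which you also supply the clean $\epsilon$-approximation fallback. In effect, the paper replaces the $K$ nuclear charges by an average, while you replace the $K$ nuclear positions by a maximizer; the two are dual instances of the same convexity mechanism.
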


\begin{proof}
Let $\epsilon > 0$ and $E_{\uR}(N, Z) \ge \E_{\uR}(\gamma) + \epsilon$.
Then we note
\[
 \E_{\uR}(\gamma)
=\sum_{j=1}^K \frac{Z_j}{Z}\left[ \tr \left( -\frac{1}{2} \Delta - Z|x-R_j|^{-1} \right) \gamma
+D[\rho_{\gamma}] -X(\gamma^{1/2})\right].
\]
Since the energy of $E_\mathrm{atom}(N, Z)$ is independent of nucler positions $R_j$, the conclusion follows. 
\end{proof}
From this bound we have
\[
E(N, \uZ) \ge E_\mathrm{atom}(N, Z) + \frac{Z_iZ_j}{|R_i-R_j|}
\]
where $R_\mathrm{min} = |R_i-R_j|$.
We now deduce from Lemma~\ref{exchangeineq} that
\[
E_\mathrm{atom}(N, Z) \ge \tr \left( -\frac{1}{4}\Delta \right) \gamma -\tr( Z|x|^{-1})\gamma + D[\rho_\gamma] - \frac{N}{4}.
\]
We need the
\begin{thm}[Lieb-Thirring kinetic energy inequality \cite{LS}]
\begin{equation*}
\tr \left(-\frac{\Delta}{2} \gamma \right) \ge \frac{3}{10}L\int_{\R^3} \rho_{\gamma}(x)^{5/3} \, dx,
\end{equation*}
with a constant $L$ (see~\cite{Dolbeault, FHJN}).
\end{thm}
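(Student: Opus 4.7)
The plan is to derive the kinetic-energy form as the Legendre-type dual of the Lieb--Thirring bound on sums of negative eigenvalues of Schr\"odinger operators. As the deep input I take the eigenvalue statement: for every $V \ge 0$ with $V \in L^{5/2}(\R^3)$, the discrete negative spectrum $\{E_j\}$ of $-\tfrac12 \Delta - V$ on $L^2(\R^3)$ satisfies
\begin{equation*}
\sum_{j} |E_j| \le L_1 \int_{\R^3} V(x)^{5/2} \, dx
\end{equation*}
for a constant $L_1 > 0$; this is exactly the form in which sharp or near-sharp constants are established in~\cite{Dolbeault, FHJN}.

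\textbf{Step 1 (variational reduction).} For $H = -\tfrac12 \Delta - V$ and any $\gamma$ with $0 \le \gamma \le 1$, the trace $\tr(H\gamma)$ is minimized over such $\gamma$ by the spectral projector onto the negative subspace of $H$, giving $\tr(H\gamma) \ge -\sum_j |E_j|$. Combining with the eigenvalue bound above yields the key inequality
\begin{equation*}
\tr\!\bigl(-\tfrac12 \Delta\,\gamma\bigr) \ge \int_{\R^3} V(x)\rho_\gamma(x)\,dx - L_1 \int_{\R^3} V(x)^{5/2}\,dx,
\end{equation*}
valid for every admissible $V \ge 0$.

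\textbf{Step 2 (optimization in $V$).} Maximize the right-hand side pointwise in $V(x) \ge 0$. The first-order condition $\rho_\gamma(x) = \tfrac{5}{2} L_1 V(x)^{3/2}$ gives $V_\star(x) = \bigl(2\rho_\gamma(x)/(5L_1)\bigr)^{2/3}$. Substituting collapses both integrals into a single $\int \rho_\gamma^{5/3}$ with prefactor $\tfrac{3}{5}\bigl(2/(5L_1)\bigr)^{2/3}$, which is exactly $\tfrac{3}{10}L$ with $L = 2\bigl(2/(5L_1)\bigr)^{2/3}$. This yields the claimed inequality.

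The real difficulty is the eigenvalue bound itself; the duality step above is purely algebraic. In practice I would simply quote the eigenvalue form in the sharpest constant currently available. Two standard routes to it are (i) the Birman--Schwinger principle combined with trace-norm estimates on the resolvent of $-\Delta$, as in the original Lieb--Thirring argument, and (ii) the coherent-state / Rumin method, which underlies the refinements in~\cite{Dolbeault, FHJN}. For the present application to M\"uller theory only the inequality together with some explicit value of $L$ is needed.
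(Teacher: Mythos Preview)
The paper does not prove this statement at all; it is quoted as a known result with citations to \cite{LS, Dolbeault, FHJN} and used as a black box. Your proposal supplies the standard duality argument converting the eigenvalue (moment) form of the Lieb--Thirring inequality into the kinetic-energy form, which is correct and is precisely the route taken in the references the paper cites. One minor technicality you glossed over: to insert the optimizer $V_\star = c\,\rho_\gamma^{2/3}$ into the eigenvalue bound you need $V_\star \in L^{5/2}$, i.e.\ $\rho_\gamma \in L^{5/3}$; if this integral is infinite the inequality is vacuous, and otherwise a routine truncation-and-limit argument makes Step~2 rigorous. With that caveat your derivation is complete and matches the standard proof in \cite{LS}.
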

Hence we infer that
\[
E_\mathrm{atom}(N, Z) 
\ge  \frac{3}{10}C\int_{\R^3} \rho(x)^{5/3} \, dx - \tr (Z|x|^{-1}) \gamma+D[\rho] - \frac{N}{4}.
\]
Next, we introduce the Thomas-Fermi (TF) functional \cite{LiebTF, LiebSimon1977} by 
\begin{equation*}
\E_\mathrm{\uR}^{TF}(\rho) = \frac{3}{10}(3 \pi^2)^{2/3}A \int_{\R^3} \rho(x)^{5/3} \, dx + \int_{\R^3} V_R(x) \rho(x) \, dx +D[\rho],
\end{equation*}
and define the lowest energy by
\begin{equation*}
E_{\uR}^{TF}(N, Z, A) = \inf \left\{ \E_{\uR}^{TF}(\rho) \colon 0 \le \rho, \int_{\R^3} \rho (x) \, dx = N, \rho \in L^{5/3}(\R^3) \right\}.
\end{equation*}
From the scaling property of the Thomas-Fermi functional~\cite{LiebTF}, we see $E_\mathrm{atom}^\mathrm{TF}(N, Z, A) \ge  -CZ^{7/3}$.
Consequently, we arrive at
\[
E(N, Z) \ge -CZ^{7/3} + \frac{Z_iZ_j}{|R_i-R_j|}.
\]
Hence we have $|R_i-R_j| \ge CZ^{-1/3}$.

Next, we  improve this bound by comparison with Thomas-Fermi theory.
In order to compare our functional with Thomas-Fermi one, we  need the following semiclassical approximations.
The following results are taken from~\cite[Lemma 8.2]{SolovejIC} (we use the optimal $\delta>0$ as in~\cite[Lemma 11]{mullerIC}).

\begin{lem}
\label{semicl}
For fixed $s>0$ and smooth $g \colon \R^3 \to [0, 1]$ satisfying $\mathrm{supp} \, g \subset \{|x|<s\}$, $\int g^2=1$, $\int |\nabla g|^2 \le Cs^{-2}$ it follows that

\begin{enumerate}
\item[(i)] For any $V \colon \R^3 \to \R$ with $[V]_+$, $[V - V \star g^2]_+ \in L^{5/2}$ and for any $0 \le \gamma \le 1$
\begin{equation*}
\begin{split}
\tr \left(- \frac{\Delta}{2} -V\right)\gamma &\ge -2^{5/2}(15 \pi^2)^{-1}\int[V]^{5/2}_+ - Cs^{-2} \tr \gamma \\
& \quad -C\left( \int [V]^{5/2}_+ \right)^{3/5} \left(\int [V-V \star g^2]^{5/2}_+\right)^{2/5},
\end{split}
\end{equation*}
where the symbol $[x]_+$ stands for $\max\{0, x\}$.

\item[(ii)]
If $[V]_+ \in L^{5/2} \cap L^{3/2}$, then there is a density-matrix $\gamma$ so that $\rho_\gamma = 2^{5/2}(6\pi^2)^{-1}[V]^{3/2}_+ \star g^2$,
\[
\tr \left( -\frac{\Delta}{2}\gamma \right) \le 2^{3/2}(5\pi^2)^{-1} \int [V]_+^{5/2}
+ Cs^{-2}\int[V]_+^{3/2}.
\]
\end{enumerate}
\end{lem}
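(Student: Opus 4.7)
The plan is to prove both parts by Lieb's coherent-state localization. Fix the coherent states $f_{p,q}(x) = e^{ip\cdot x}g(x-q)$ and write $m_\gamma(p,q) = \langle f_{p,q}|\gamma|f_{p,q}\rangle \in [0,1]$. The two identities I will use repeatedly are the resolution of identity $\int |f_{p,q}\rangle\langle f_{p,q}|\frac{dp\,dq}{(2\pi)^3} = I$ and the kinetic identity $-\Delta = \int p^2\,|f_{p,q}\rangle\langle f_{p,q}|\frac{dp\,dq}{(2\pi)^3} - \|\nabla g\|^2\,I$, both verified by direct Plancherel computation; note $\|\nabla g\|^2 \le Cs^{-2}$ by assumption. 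Integrating the first identity in $p$ alone gives $\int |f_{p,q}\rangle\langle f_{p,q}|\frac{dp}{(2\pi)^3} = g(\cdot-q)^2$ as a multiplication operator, and hence the Toeplitz representation $V\star g^2 = \int V(q)\,|f_{p,q}\rangle\langle f_{p,q}|\frac{dp\,dq}{(2\pi)^3}$.

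For part (i), I would decompose $\tr(-\Delta/2-V)\gamma = \tr(-\Delta/2-V\star g^2)\gamma - \int(V-V\star g^2)\rho_\gamma\,dx$. Using the kinetic identity and the Toeplitz representation, the first summand equals $\int[p^2/2-V(q)]\,m_\gamma\frac{dp\,dq}{(2\pi)^3} - \tfrac12\|\nabla g\|^2\tr\gamma$; since $0\le m_\gamma\le 1$, this is bounded below by taking $m_\gamma=1$ exactly on $\{p^2/2<V(q)\}$, which after the $p$-integral produces $-c\int[V]_+^{5/2}$ plus the $-Cs^{-2}\tr\gamma$ remainder. For the second summand, the pointwise bound $V-V\star g^2\le[V-V\star g^2]_+$ and Hölder with conjugate exponents $5/2$ and $5/3$ give $\int(V-V\star g^2)\rho_\gamma \le \|[V-V\star g^2]_+\|_{5/2}\|\rho_\gamma\|_{5/3}$. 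To bound $\|\rho_\gamma\|_{5/3}$ in terms of $V$, I would invoke a sign dichotomy: if $\tr(-\Delta/2-V)\gamma\ge 0$ the entire right-hand side of (i) is nonpositive and the inequality is trivial; otherwise $\tr(-\Delta/2)\gamma\le\int V\rho_\gamma\le\|V\|_{5/2}\|\rho_\gamma\|_{5/3}$, and combining this with the Lieb--Thirring kinetic inequality $\|\rho_\gamma\|_{5/3}^{5/3}\le c\,\tr(-\Delta)\gamma$ forces $\|\rho_\gamma\|_{5/3}\le C\bigl(\int[V]_+^{5/2}\bigr)^{3/5}$, which delivers exactly the stated error term.

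For part (ii), I would take the explicit Toeplitz state $\gamma = \int \mathbf{1}\{p^2/2<V(q)\}\,|f_{p,q}\rangle\langle f_{p,q}|\frac{dp\,dq}{(2\pi)^3}$. Then $\gamma\ge 0$ is immediate, and $\langle\phi|\gamma|\phi\rangle \le \int|\langle f_{p,q}|\phi\rangle|^2\frac{dp\,dq}{(2\pi)^3}=\|\phi\|^2$ by the resolution of identity gives $\gamma\le I$. The density is $\rho_\gamma(x) = \int_{p^2/2<V(q)}g(x-q)^2\frac{dp\,dq}{(2\pi)^3}$, which after the $p$-integral is the claimed $c\,[V]_+^{3/2}\star g^2(x)$. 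Using $\langle f_{p,q}|(-\Delta)|f_{p,q}\rangle = p^2 + \|\nabla g\|^2$, the kinetic energy becomes $\tr(-\Delta)\gamma = \int_{p^2/2<V(q)}(p^2+\|\nabla g\|^2)\frac{dp\,dq}{(2\pi)^3}$; the $p^2$-piece yields the Weyl term $c\int[V]_+^{5/2}$ and the $\|\nabla g\|^2$-piece yields the $Cs^{-2}\int[V]_+^{3/2}$ correction.

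The main obstacle is pinning down the error term in part (i) with the precise exponents $3/5$ and $2/5$, since for a generic $\gamma$ with $0\le\gamma\le 1$ no a priori bound on $\tr(-\Delta)\gamma$ is available; the sign dichotomy on $\tr(-\Delta/2-V)\gamma$ is what breaks the impasse, and without it one can only obtain an error involving $\tr(-\Delta)\gamma$ explicitly. Everything else reduces to routine coherent-state bookkeeping, together with the one-off evaluation of the Weyl $p$-integrals to produce the advertised constants.
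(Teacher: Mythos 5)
Your argument is correct and is precisely the coherent-state proof given in the sources the paper cites for this lemma (\cite[Lemma~8.2]{SolovejIC} and \cite[Lemma~11]{mullerIC}); the paper itself does not re-derive it, and your sign dichotomy combined with H\"older and Lieb--Thirring is exactly the device that produces the interpolated exponents $3/5$ and $2/5$. One bookkeeping point worth flagging: the stated constants carry the spin degeneracy $q=2$, so the spinless Toeplitz construction on $L^2(\R^3)$ with $0\le\gamma\le 1$ that you describe yields exactly half the target density and half of each Weyl constant (e.g.\ $\rho_\gamma = 2^{3/2}(6\pi^2)^{-1}[V]_+^{3/2}\star g^2$ rather than $2^{5/2}(6\pi^2)^{-1}[V]_+^{3/2}\star g^2$); to recover the advertised $2^{5/2}(6\pi^2)^{-1}$, $2^{5/2}(15\pi^2)^{-1}$, and $2^{3/2}(5\pi^2)^{-1}$ one should either tensor with $\C^2$ or relax the constraint to $0\le\gamma\le 2$, with the corresponding $q$-factor propagated through both parts.
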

We introduce the TF potential for the molecule as the function
\[
\phi_\mathrm{mol}^\mathrm{TF}(x) \coloneqq \sum_{i=1}^KZ_i|x-R_i|^{-1} - 
\int_{\R^3} \frac{\rho_\mathrm{mol}^\mathrm{TF} (y)}{|x-y|} \, dy,
\]
where $\rho_\mathrm{mol}^\mathrm{TF}$ is the unique minimizing density for $E^\mathrm{TF}(N, Z, \uR) = E^\mathrm{TF}(N, Z, \uR, 1)$ (when $N > Z$ we take the minimizer for the neutral molecule).
First, we shall show 

\begin{lem}
\label{lem:lower}
For any configuration $\uR \in \R^{3K}$ and density-matrix $\gamma$ we have

\begin{equation}
\label{lower}
\E_{\uR}(\gamma) \ge \E_{\uR}^\mathrm{TF}(\rho_\mathrm{mol}^\mathrm{TF}) + D\left[\rho_\gamma - \rho_\mathrm{mol}^\mathrm{TF}\right] -CZ^{25/11}.
\end{equation}
\end{lem}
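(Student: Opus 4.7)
The plan is to follow the standard Thomas--Fermi comparison argument used for atomic M\"uller theory in~\cite{mullerIC} (after Solovej~\cite{SolovejIC}), adapted here to the molecular setting. The idea is to insert $\phi_\mathrm{mol}^\mathrm{TF}$ as an intermediate mean field, apply the semiclassical lower bound of Lemma~\ref{semicl}(i), and identify what remains with $\E_{\uR}^\mathrm{TF}(\rho_\mathrm{mol}^\mathrm{TF})$.

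First I would complete the square around $\rho_\mathrm{mol}^\mathrm{TF}$. Using the identity $V_{\uR} = \phi_\mathrm{mol}^\mathrm{TF} + \rho_\mathrm{mol}^\mathrm{TF}\star |x|^{-1}$ and expanding $D[\rho_\gamma - \rho_\mathrm{mol}^\mathrm{TF}]$, I obtain the exact decomposition
\begin{equation*}
\E_{\uR}(\gamma) = \tr\Bigl[\Bigl(-\tfrac{\Delta}{2} - \phi_\mathrm{mol}^\mathrm{TF}\Bigr)\gamma\Bigr] + D\bigl[\rho_\gamma - \rho_\mathrm{mol}^\mathrm{TF}\bigr] - D\bigl[\rho_\mathrm{mol}^\mathrm{TF}\bigr] - X(\gamma^{1/2}).
\end{equation*}
The positive quadratic term $D[\rho_\gamma - \rho_\mathrm{mol}^\mathrm{TF}]$ is precisely what the lemma keeps; the remaining three pieces must reassemble to $\E_{\uR}^\mathrm{TF}(\rho_\mathrm{mol}^\mathrm{TF})$ up to an error of order $Z^{25/11}$.

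Second, I would apply Lemma~\ref{semicl}(i) with $V = \phi_\mathrm{mol}^\mathrm{TF} - \mu$, where $\mu \ge 0$ is the TF chemical potential, and a smooth radial bump $g$ at scale $s > 0$. The TF Euler--Lagrange relation $\tfrac{1}{2}(3\pi^2)^{2/3}(\rho_\mathrm{mol}^\mathrm{TF})^{2/3} = [\phi_\mathrm{mol}^\mathrm{TF}-\mu]_+$ gives the clean identity
\begin{equation*}
-\tfrac{2^{5/2}}{15\pi^2}\int[\phi_\mathrm{mol}^\mathrm{TF} - \mu]_+^{5/2} = \E_{\uR}^\mathrm{TF}(\rho_\mathrm{mol}^\mathrm{TF}) + \mu \!\!\int\!\! \rho_\mathrm{mol}^\mathrm{TF} + D\bigl[\rho_\mathrm{mol}^\mathrm{TF}\bigr],
\end{equation*}
so that after absorbing the $-\mu\tr\gamma$ produced by the shift (which combines with the $\mu\int\rho_\mathrm{mol}^\mathrm{TF}$ term into $-\mu(\tr\gamma - \int\rho_\mathrm{mol}^\mathrm{TF})$, controllable since $\mu \lesssim 1$) and cancelling the two $D[\rho_\mathrm{mol}^\mathrm{TF}]$ contributions, one is left with $\E_{\uR}^\mathrm{TF}(\rho_\mathrm{mol}^\mathrm{TF}) + D[\rho_\gamma - \rho_\mathrm{mol}^\mathrm{TF}]$ minus three error terms: the localization error $Cs^{-2}\tr\gamma$, the convolution oscillation term from Lemma~\ref{semicl}(i), and the exchange $X(\gamma^{1/2})$.

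The main obstacle is the quantitative control of these errors, which all must be $\le CZ^{25/11}$. For the exchange I would either use a Lieb--Oxford--type bound $X(\gamma^{1/2}) \le C\int\rho_\gamma^{4/3}$ combined with H\"older interpolation against $\int\rho_\gamma \lesssim Z$ and $\int\rho_\gamma^{5/3}$, or reserve a small fraction $\epsilon\tr(-\Delta)\gamma$ of the kinetic energy before applying Lemma~\ref{semicl}(i) and then invoke Lemma~\ref{exchangeineq}. For the semiclassical errors I would use the classical TF bounds $\int[\phi_\mathrm{mol}^\mathrm{TF}]_+^{5/2} \lesssim Z^{7/3}$, $\tr\gamma \lesssim Z$, together with the sharp convolution estimate $\int[\phi_\mathrm{mol}^\mathrm{TF} - \phi_\mathrm{mol}^\mathrm{TF}\star g^2]_+^{5/2} \lesssim s^{a}Z^{b}$ derived from the regularity and Sommerfeld--type decay of $\phi_\mathrm{mol}^\mathrm{TF}$ near each nucleus. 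Balancing the localization error $s^{-2}Z$, the oscillation error, and the exchange contribution by choosing $s = Z^{-\alpha}$ with the optimal $\alpha$ produces the non--obvious exponent $25/11$, exactly as in the sharp atomic bound of~\cite{mullerIC}.
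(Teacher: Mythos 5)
Your proposal follows essentially the same route as the paper: complete the square around $\rho_\mathrm{mol}^\mathrm{TF}$ using $V_{\uR}=\phi_\mathrm{mol}^\mathrm{TF}+\rho_\mathrm{mol}^\mathrm{TF}\star|x|^{-1}$, apply Lemma~\ref{semicl}(i) with $V=\phi_\mathrm{mol}^\mathrm{TF}-\mu$, invoke the TF equation to identify the main semiclassical term with $\E_{\uR}^\mathrm{TF}(\rho_\mathrm{mol}^\mathrm{TF})$, estimate the three error terms, and optimize in $s$. Three small points, though. First, the ``$\mu\lesssim 1$'' comment is both unnecessary and generally false (for a TF system with $N$ well below $Z$ the chemical potential is large); the paper avoids any such estimate because $\tr\gamma=N=\int\rho_\mathrm{mol}^\mathrm{TF}$ (the density is taken neutral once $N>Z$), so $-\mu\tr\gamma$ and $+\mu\int\rho_\mathrm{mol}^\mathrm{TF}$ cancel exactly. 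Second, the oscillation term $\int[\phi_\mathrm{mol}^\mathrm{TF}-\phi_\mathrm{mol}^\mathrm{TF}\star g^2]_+^{5/2}$ is not controlled by Sommerfeld (large-distance) decay; the paper instead uses superharmonicity of $V_{\uR}$ and of $|x|^{-1}$ (via the maximum principle and Newton's theorem) to deduce $[\phi_\mathrm{mol}^\mathrm{TF}-\phi_\mathrm{mol}^\mathrm{TF}\star g^2]_+\le V_{\uR}-V_{\uR}\star g^2\le\sum_j Z_j|x-R_j|^{-1}\1(|x-R_j|\le s)$, whose $L^{5/2}$ norm gives $CZ^{5/2}s^{1/2}$; the relevant input is the nuclear singularity, not the tail. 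Third, the paper bounds the exchange term not by Lieb--Oxford but by the elementary Cauchy--Schwarz/Hardy estimate $X(\gamma^{1/2})\le CN^{1/2}\bigl(\tr(-\Delta)\gamma\bigr)^{1/2}\le CZ^{5/3}$, using the a priori kinetic bound $\tr(-\Delta)\gamma\le CZ^{7/3}$ for the minimizer; your alternative (b), reserving an $\epsilon$-fraction of kinetic energy and invoking Lemma~\ref{exchangeineq}, would also work and is arguably cleaner since it avoids the a priori bound, while your option (a) is dubious because a Lieb--Oxford bound $X(\gamma^{1/2})\le C\int\rho_\gamma^{4/3}$ is not known for general $0\le\gamma\le1$ (the relevant density is $\rho_{\gamma^{1/2}}$, not $\rho_\gamma$). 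None of these issues change the overall architecture, which matches the paper's.
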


\begin{proof}[Proof of Lemma \ref{lem:lower}]
We can write
\[
\E_{\uR}(\gamma)  = \tr \left(-\frac{\Delta}{2} - \phi_\mathrm{mol}^\mathrm{TF}\right)\gamma + D\left[\rho_\gamma - \rho_\mathrm{mol}^\mathrm{TF}\right]  - D\left[\rho_\mathrm{mol}^\mathrm{TF}\right] -X(\gamma^{1/2}).
\]
According to $N \le CZ$, we may bound the exchange term by
\[
X(\gamma^{1/2}) \le CZ^{5/3}.
\]
Indeed, we infer from Hardy's inequality that
\begin{align*}
&\iint_{\R^3 \times \R^3} \frac{|\gamma^{1/2}(x, y)|^2}{|x-y|} \, dx \, dy  \\
& \le \left(\iint |\gamma^{1/2}(x, y)|^2\, dx \, dy\right)^{1/2}\left(\iint \frac{|\gamma^{1/2}(x, y)|^2}{|x-y|^2} \, dx \, dy\right)^{1/2} \\
& \le 4N^{1/2} (\tr (-\Delta) \gamma)^{1/2}.
\end{align*}
We recall $\tr (-\Delta) \gamma \le CZ^{7/3}$ by the energy bound.

Next, from Lemma \ref{semicl} (i) we have
\begin{equation*}
\begin{split}
&\tr \left(- \frac{\Delta}{2} -\phi_\mathrm{mol}^\mathrm{TF} + \mu(N, Z, \uR)\right)\gamma \\
&\ge -2^{5/2}(15 \pi)^{-1}\int[\phi_\mathrm{mol}^\mathrm{TF} -\mu(N, Z, \uR)]^{5/2}_+ - Cs^{-2}\tr \gamma \\
&-C\left( \int [\phi_\mathrm{mol}^\mathrm{TF}- \mu(N, Z, \uR)]^{5/2}_+ \right)^{3/5} \left(\int [\phi_\mathrm{mol}^\mathrm{TF}-\phi_\mathrm{mol}^\mathrm{TF} \star g^2]^{5/2}_+\right)^{2/5}.
\end{split}
\end{equation*}
Here $\mu(N, Z, \uR) \ge 0$ is the chemical potential for the molecule.
It is known~\cite{LiebTF} that the functions $\rho_\mathrm{mol}^\mathrm{TF}$ and $\phi_\mathrm{mol}^\mathrm{TF}$ satisfy the TF equation
\begin{equation}
\label{TFeq}
\rho_\mathrm{mol}^\mathrm{TF}(x)^{2/3} = 2^{5/3}(6\pi^2)^{-2/3}[\phi_\mathrm{mol}^\mathrm{TF}(x) - \mu(N, Z, \uR)]_+.
\end{equation}
Using the TF equation and scaling property in Thomas-Fermi theory, we have
\[
\int [\phi_\mathrm{mol}^\mathrm{TF}- \mu(N, Z, \uR)]_+^{5/2} \le C \int (\rho_\mathrm{mol}^\mathrm{TF})^{5/3} \le C{Z^{7/3}}.
\]
Since $V_R$ is superharmonic, it follows that  $V_{\uR} - V_{\uR} \star g^2 \ge 0$ by the maximum principle.
To see this, we note that $V_{\uR}  \star g^2$ is a continuous function going to zero at infinity, and therefore $\psi \coloneqq V_{\uR}  - V_{\uR} \star g^2 \to \infty$ as $x \to R_i$ for any $i$.
Since $\psi$ is continuous away from the $R_i$, $A \coloneqq \{ x \colon \psi(x) < 0\}$ is open and disjoint from the $R_i$.
Thus  $-\Delta \psi \le 0$ on $A$. 
It is clear that $\psi(x) \to 0$ as $|x| \to \infty$ and hence $A$ is empty by the maximum principle.
 Hence $\psi \ge 0$. 
 
Similarly, we see $|x|^{-1} - |x|^{-1}\star g^2 \ge0$, and hence $\rho_\mathrm{mol}^\mathrm{TF}\star |x|^{-1} - \rho_\mathrm{mol}^\mathrm{TF}\star g^2 \star |x|^{-1} \ge 0$ follows.
 We recall  Newton's theorem
 \[
 \int_{\S^2} |x-y|^{-1} \frac{d \nu (y)}{4\pi} = \min (|x|^{-1}, |y|^{-1})
 \]
 for any $x \in \R^3$.
 Then
\begin{equation*}
V_{\uR} - V_{\uR}\star g^2 \le \sum_{j=1}^KZ_j\left(|x-R_j|^{-1} \1(|x-R_j| \le s)\right).
\end{equation*}

Using this bound,  we obtain
\begin{equation*}
\begin{split}
\int [\phi_\mathrm{mol}^\mathrm{TF}-\phi_\mathrm{mol}^\mathrm{TF} \star g^2]_+^{5/2} 
&\le \int [V_{\uR} - V_{\uR}  \star g^2]_+^{5/2} \\
& \le Z^{5/2}\sum_{i=1}^K\int_{|x-R_i|\le s} |x-R_i|^{-5/2}\, dx \\
&\le CZ^{5/2}s^{1/2},
\end{split}
\end{equation*}
where we have used the convexity of $x^{5/2}$.
Hence
\begin{equation*}
\begin{split}
\tr \left(- \frac{\Delta}{2} -\phi_\mathrm{mol}^\mathrm{TF}\right)\gamma 
&\ge -2^{5/2}(15 \pi^2)^{-1}\int[\phi_\mathrm{mol}^\mathrm{TF}-\mu(N, Z, \uR)]^{5/2}_+\\
&\quad - Cs^{-2} Z -CZ^{12/5}s^{1/5} - \mu(N, Z, \uR) N.
\end{split}
\end{equation*}
Optimizing over $s>0$, we get
\begin{align*}
\tr &\left(- \frac{\Delta}{2} -\phi_\mathrm{mol}^\mathrm{TF}\right)\gamma \\
&\ge -2^{5/2}(15 \pi^2)^{-1}\int[\phi_\mathrm{mol}^\mathrm{TF} - \mu(N, Z, \uR)]^{5/2}_+ \\
&\quad - \mu(N, Z, \uR)N - CZ^{25/11} \\
&=-\frac{3}{10}(2/3)(3\pi^2)^{2/3}\int \left[ \left(\rho^\mathrm{TF}_\mathrm{mol}\right)^{5/3}-\mu(N, Z, \uR)\right]^{5/2}_+ \\
&\quad - \mu(N, Z, \uR)N - CZ^{25/11}.
\end{align*}
Using the relation obtained from the TF equation
\begin{align*}
-\mu(N, Z, \uR)N -D\left[\rho_\mathrm{mol}^\mathrm{TF}\right]
&=\frac{3}{10}(5/3)(3\pi^2)^{2/3}\int \left(\rho^\mathrm{TF}_\mathrm{mol}\right)^{5/3} \\
&\quad - \int \rho^\mathrm{TF}_\mathrm{mol} V_{\uR} +D\left[\rho^\mathrm{TF}_\mathrm{mol}\right],
\end{align*}
we learn
\begin{align*}
\tr \left(- \frac{\Delta}{2} -\phi_\mathrm{mol}^\mathrm{TF}\right)\gamma \ge
\E_{\uR}^\mathrm{TF}(\rho_\mathrm{mol}^\mathrm{TF}) + D\left[\rho_\gamma - \rho_\mathrm{mol}^\mathrm{TF}\right] -CZ^{25/11},
\end{align*}
which shows (\ref{lower}).
\end{proof}
We denote 
\[
\Gamma(N, \uZ, \uR) \coloneqq E^\mathrm{TF}_\mathrm{mol}(N, Z, \uR) - \inf \left\{ \sum_{j=1}^K E^\mathrm{TF}_\mathrm{atom}(N_j, Z_j) \colon \sum_{j=1}^K N_j = N \right\}.
\]
It was shown in~\cite[in the proof of Theorem 8]{RuskaiSolovej} that for any pair $(R_i, R_j)$ from $\uR$ there is a decomposition $(N_1, \dots, N_K)$ with $\sum_j N_j = N$ so that
\[
\Gamma(N, \uZ, \uR) \ge \Gamma(N_i + N_j, (Z_i, Z_j), (R_i, R_j)).
\]
From the result in~\cite{BenguriaTF} $\Gamma$ is smallest in the neutral case.
Moreover, it was shown in~\cite{BrezisLiebTF} that $\Gamma(N_i + N_j, (Z_i, Z_j), l(R_i, R_j))l^7$ is an increasing function of $l$ for the neutral case.
 By $|R_i-R_j| > C_0(Z_i + Z_j)^{-1/3}$, with $(R_i, R_j)=  R(Z_i + Z_j)^{-1/3}|R_i-R_j|^{-1}(R_i, R_j)$, we see  $R > C_0$.
 We put $\underline{z_{ij}} \coloneqq (Z_i+Z_j)^{-1}(Z_i, Z_j) $ and $\underline{r_{ij}} \coloneqq |R_i-R_j|^{-1}(R_i, R_j)$ for convenience. Then
 \begin{align*}
 \Gamma(N_i + N_j, (Z_i, Z_j), (R_i, R_j))  
 &\ge (Z_i+Z_j)^{7/3} \Gamma(1, \underline{z_{ij}}  , R\underline{z_{ij}}) \\
 &\ge  |R_i - R_j|^{-7}C_0^7 \Gamma(1, \underline{z_{ij}} , C_0\underline{r_{ij}})\\
 &= C |R_i - R_j|^{-7}.
 \end{align*}
 Here we have used the scaling property of the Thomas-Fermi functional.
 
 Together with these results, we obtain
 \begin{align*}
 \E_{\uR}(\gamma) + U_{\uR} \ge \sum_{i=1}^K E_\mathrm{atom}^\mathrm{TF}(N_i, Z_i)  -CZ^{25/11}  + D\left[\rho_\gamma - \rho_\mathrm{mol}^\mathrm{TF}\right] +CR_\mathrm{min}^{-7}.
 \end{align*}

Next, we show an upper bound for the energy of the M\"uller atom.

\begin{lem}
\label{lem}
For any $N>0$ and $Z>0$
\begin{equation*}
E_\mathrm{atom}(N, Z) \le \E^\mathrm{TF}_\mathrm{atom}(N, Z) + CZ^{11/5}.
\end{equation*}
\end{lem}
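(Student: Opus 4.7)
The plan is to build a coherent-state trial density matrix $\gamma$ whose one-particle density is a mollification of the atomic Thomas-Fermi density $\rho^\mathrm{TF}_\mathrm{atom}$, and then bound, term by term, the four pieces of $\E_\mathrm{atom}(\gamma)$ against their TF counterparts.

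Fix a parameter $s>0$ to be optimized, and choose a smooth radial $g$ with $\mathrm{supp}\,g\subset\{|x|<s\}$, $\int g^2=1$, and $\int|\nabla g|^2\le Cs^{-2}$. Applying Lemma~\ref{semicl}(ii) with $V=\phi^\mathrm{TF}_\mathrm{atom}-\mu$, where $(\rho^\mathrm{TF}_\mathrm{atom},\phi^\mathrm{TF}_\mathrm{atom},\mu)$ is the atomic TF minimizer for charge $\min(N,Z)$ with chemical potential $\mu\ge 0$, produces a density matrix $\gamma$ with $\rho_\gamma=\rho^\mathrm{TF}_\mathrm{atom}\star g^2$; indeed the TF equation (\ref{TFeq}) identifies $2^{5/2}(6\pi^2)^{-1}[V]_+^{3/2}$ with $\rho^\mathrm{TF}_\mathrm{atom}$. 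Consequently $\tr\gamma=\min(N,Z)$; when $N>Z$ I would adjoin a density matrix localized far from the origin in order to reach $\tr\gamma=N$ at an energy cost negligible compared to $Z^{11/5}$, using that $\E^\mathrm{TF}_\mathrm{atom}(N,Z)=\E^\mathrm{TF}_\mathrm{atom}(Z,Z)$ for $N\ge Z$ by TF non-binding.

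The four pieces of $\E_\mathrm{atom}(\gamma)$ are then controlled as follows. The semiclassical kinetic inequality, together with the TF identity $2^{3/2}(5\pi^2)^{-1}[V]_+^{5/2}=\frac{3}{10}(3\pi^2)^{2/3}(\rho^\mathrm{TF}_\mathrm{atom})^{5/3}$ and $\int[V]_+^{3/2}\le CZ$, gives
\[
\tr\!\left(-\tfrac12\Delta\right)\!\gamma \;\le\; \tfrac{3}{10}(3\pi^2)^{2/3}\!\int(\rho^\mathrm{TF}_\mathrm{atom})^{5/3} + Cs^{-2}Z.
\]
For the attraction, rewriting $-Z\int|x|^{-1}\rho_\gamma=-Z\int(|x|^{-1}\star g^2)\rho^\mathrm{TF}_\mathrm{atom}$ and applying Newton's theorem shows that $|x|^{-1}-|x|^{-1}\star g^2$ is non-negative, supported in $\{|x|\le s\}$, and bounded there by $|x|^{-1}$; combining with the pointwise TF asymptotic $\rho^\mathrm{TF}_\mathrm{atom}(x)\le C(Z/|x|)^{3/2}$ near the nucleus yields an error of size $CZ^{5/2}s^{1/2}$. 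The direct term obeys $D[\rho^\mathrm{TF}_\mathrm{atom}\star g^2]\le D[\rho^\mathrm{TF}_\mathrm{atom}]$ by convexity of $D$ and translation invariance (Jensen's inequality, viewing $\rho\star g^2$ as an average of translates of $\rho$), and the exchange contribution $-X(\gamma^{1/2})\le 0$ is simply dropped.

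Assembling the four bounds gives
\[
\E_\mathrm{atom}(\gamma) \;\le\; \E^\mathrm{TF}_\mathrm{atom}(\rho^\mathrm{TF}_\mathrm{atom}) + C\bigl(s^{-2}Z+Z^{5/2}s^{1/2}\bigr),
\]
and the choice $s=Z^{-3/5}$ balances the two errors and produces the claimed $CZ^{11/5}$ remainder. The delicate step is the nuclear attraction error: obtaining the exponent $11/5$ (rather than the weaker $25/11$ appearing in Lemma~\ref{lem:lower}) relies on the sharp pointwise Sommerfeld-type bound $\rho^\mathrm{TF}_\mathrm{atom}(x)\le C(Z/|x|)^{3/2}$ near the nucleus, which cannot be replaced by an $L^{5/3}$-Hölder estimate without losing a power of $Z$.
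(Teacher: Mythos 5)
Your proposal is correct and follows essentially the same route as the paper: build the trial state from Lemma~\ref{semicl}(ii) with $V=\phi^\mathrm{TF}_\mathrm{atom}-\mu$, identify $\rho_\gamma=\rho^\mathrm{TF}_\mathrm{atom}\star g^2$ via the TF equation, control kinetic and attraction errors exactly as the paper does (Newton's theorem plus the pointwise Sommerfeld bound $\rho^\mathrm{TF}\le C(Z/|x|)^{3/2}$, then $s=Z^{-3/5}$), and drop the exchange term by passing through the reduced Hartree--Fock energy. The only cosmetic differences are the justification of $D[\rho\star g^2]\le D[\rho]$ (you use Jensen plus translation invariance, the paper uses the Fourier inequality $g^2\star|x|^{-1}\star g^2\le|x|^{-1}$, which are equivalent) and your explicit handling of the $N>Z$ case, which the paper leaves implicit.
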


\begin{proof}
First, we introduce the reduced Hartree-Fock functional by
\[
 \E_\mathrm{atom}^\mathrm{RHF}(\gamma)
 \coloneqq \tr \left( -\frac{1}{2} \Delta - Z|x|^{-1}\right) \gamma
+D[\rho_{\gamma}].
\]
It is clear that
\[
E_\mathrm{atom}(N, Z) \le \inf\{ \E_\mathrm{atom}^\mathrm{RHF}(\gamma) \colon 0 \le \gamma \le 1, \, \tr \gamma = N\}
\]
We introduce the atomic Thomas-Fermi potential by
\[
\phi_\mathrm{atom}^\mathrm{TF}(x) = Z|x|^{-1} - \rho_\mathrm{atom}^\mathrm{TF} \star |x|^{-1},
\]
where $\rho_\mathrm{atom}$ is the minimizer for the atomic ($K=1$) Thomas-Fermi functional $E^\mathrm{TF}_\mathrm{atom}(N, Z)$ (in the negative ionic situation $N > Z$, we take the neutral TF minimizer).
We apply Lemma \ref{semicl} (ii) with $V=\phi_\mathrm{atom}^\mathrm{TF} - \mu$ ($\mu$ is the chemical potential for the TF atom) and a spherically symmetric $g$ to obtain a density matrix $\gamma'$.
Because of the Thomas-Fermi equation  we see that
\[
\rho_{\gamma'} = 2^{5/2}(6\pi^2)^{-1}(\phi_\mathrm{atom}^\mathrm{TF} - \mu)^{3/2}\star g^2 = \rho_\mathrm{atom}^\mathrm{TF} \star g^2.
\]
Since
\[
\tr \gamma' = \int \rho_{\gamma'} = \int \rho_\mathrm{atom}^\mathrm{TF} = N,
\]
we obtain
\[
\inf\{\E^\mathrm{RHF}(\gamma) \colon 0 \le \gamma \le 1, \, \tr \gamma = N\} \le \E^\mathrm{RHF}(\gamma').
\]
Again, by Lemma \ref{semicl} (ii),
\begin{equation*}
\begin{split}
\E^\mathrm{RHF}(\gamma') &\le 2^{3/2}(5\pi^2)^{-1} \int [V]_+^{5/2} + Cs^{-2}\int[V]_+^{3/2} \\
&\quad  -\int Z|x|^{-1}(\rho_\mathrm{atom}^\mathrm{TF}\star g^2)(x) \, dx +D\left[\rho_\mathrm{atom}^\mathrm{TF}\star g^2\right] \\
&\le 2^{3/2}(5\pi^2)^{-1} \int [V]_+^{5/2} - \int [\phi^\mathrm{TF}_\mathrm{atom} - \mu]\rho_\mathrm{atom}^\mathrm{TF}(x) \, dx  -\mu N \\
&\quad  - D\left[\rho_\mathrm{atom}^\mathrm{TF}\right]
+Z\int(|x|^{-1} - |x|^{-1}\star g^2)\rho_\mathrm{atom}^\mathrm{TF}(x) \, dx  \\
&\quad + Cs^{-2}\int \rho_\mathrm{atom}^\mathrm{TF}   \\
&=-2^{5/2}(15\pi^2)^{-1} \int [\phi^\mathrm{TF}_\mathrm{atom} - \mu]_+^{5/2}
-  D\left[\rho_\mathrm{atom}^\mathrm{TF}\right] 
-\mu N \\
& \quad \quad+ Cs^{-2}\int \rho_\mathrm{atom}^\mathrm{TF} +Z\int(|x|^{-1} - |x|^{-1}\star g^2)\rho_\mathrm{atom}^\mathrm{TF} \\
&=\E_\mathrm{atom}^\mathrm{TF}(\rho_\mathrm{atom}^\mathrm{TF})+ Cs^{-2}\int \rho_\mathrm{atom}^\mathrm{TF} +Z\int(|x|^{-1} - |x|^{-1}\star g^2)\rho_\mathrm{atom}^\mathrm{TF}.
\end{split}
\end{equation*}
In the second inequality, we have used

\begin{equation*}
[g^2 \star |x|^{-1} \star g^2](x-y) \le |x-y|^{-1},
\end{equation*}
as an operator and function.
This is shown, for instance, by using the Fourier transform.
By Newton's theorem, 
\[
0 \le |x|^{-1} - |x|^{-1}\star g^2 = |x|^{-1} \1(|x| \le s).
\]
Then, by the H\"{o}lder inequality,

\begin{equation*}
\begin{split}
Z\int&(|x|^{-1} - |x|^{-1}\star g^2)\rho_\mathrm{atom}^\mathrm{TF} \\
&\le Z\left( \int (\rho_\mathrm{atom}^\mathrm{TF})^{5/3}\right)^{3/5}\left( \int (|x|^{-1} - |x|^{-1}\star g^2)^{5/2}\right)^{2/5} \\
&\le CZ\left( \int (Z|x|^{-1})^{5/2}\right)^{3/5}\left(\int_{|x|\le s} |x|^{-5/2}\right)^{2/5}\, dx \\
&\le CZ^{5/2}s^{1/2},
\end{split}
\end{equation*}
where we have used the Thomas-Fermi equation in the second inequality.
Thus, after optimization in $s$, we arrive at
\[
\E^\mathrm{RHF}(\gamma') \le \E_\mathrm{atom}^\mathrm{TF}(\rho_\mathrm{atom}^\mathrm{TF}) + CZ^{11/5}.
\]
This shows the desired upper bound.
\end{proof}

Inserting this, we obtain
\begin{equation*}
E(N, Z) \ge \sum_{j=1}^K E_\mathrm{atom}(N_j, Z_j) - CZ^{25/11} + D[\rho_\gamma - \rho_\mathrm{mol}^\mathrm{TF}] +CR^{-7}_\mathrm{min}.
\end{equation*}
This completes the proof.\qed

\begin{rem}
It immediately follows that
\begin{equation}
\label{ineq:compTF}
D[\rho_\gamma - \rho_\mathrm{mol}^\mathrm{TF}] \le CZ^{25/11},
\end{equation}
and  $R_\mathrm{min} \ge CZ^{-(1/3)(1- \epsilon)}$ with $\epsilon = 2/77$.
These bounds are the crucial  ingredients when compared with Thomas-Fermi theory.
\end{rem}

\section{Bound on the Positive Excess  Charge}
We assume that a molecule is stable in a configuration $\uR \in \R^{3K}$ and $N < Z$.
Let $\gamma$ be a minimizer for the stable molecule.
The next lemma allows us to localize the M\"uller functional (see~\cite[Lemma 6]{mullerIC}).

\begin{lem}[IMS-type formula]
\label{IMS}
For any quadratic partition of unity $\sum_{j=0}^n \theta_j^2 = 1$ with $\nabla \theta_j \in L^\infty$ and for any density-matrix $\gamma \in \mathcal{P}$, we have
\begin{equation*}
\begin{split}
&\sum_{j=0}^n \E_{\uR}(\theta_j \gamma \theta_j) - \E_{\uR}(\gamma) \\
&\le 
\int_{\R^3} \sum_{j=0}^n |\nabla \theta_j(x)|^2 \rho_\gamma(x) \, dx \\
&\quad + \sum_{i < j}^n \iint_{\R^3 \times \R^3}\frac{\theta_i(x)^2 (|\gamma^{1/2}(x, y)|^2 - \rho_\gamma(x)\rho_\gamma(y))\theta_j(y)^2}{|x-y|} \, dx \, dy.
\end{split}
\end{equation*}
\end{lem}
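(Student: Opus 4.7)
The plan is to track separately each of the four ingredients of $\E_{\uR}(\gamma)$: the kinetic term $T(\gamma)=\tr(-\tfrac12\Delta)\gamma$, the one-body potential $V(\gamma)=-\tr V_{\uR}\gamma$, the direct Coulomb $D[\rho_\gamma]$, and the exchange $-X(\gamma^{1/2})$, and to compare $\sum_j\E_{\uR}(\theta_j\gamma\theta_j)$ with $\E_{\uR}(\gamma)$ piece by piece. Since the density localizes exactly as $\rho_{\theta_j\gamma\theta_j}(x)=\theta_j(x)^2\rho_\gamma(x)$, the relation $\sum_j\theta_j^2=1$ makes the potential contributions cancel exactly, and a direct expansion of $1=\sum_{i,j}\theta_i^2(x)\theta_j^2(y)$ combined with symmetry in $(x,y)$ gives the closed-form identity
\[
D[\rho_\gamma]-\sum_j D[\rho_{\theta_j\gamma\theta_j}]=\sum_{i<j}\iint\frac{\theta_i^2(x)\theta_j^2(y)\rho_\gamma(x)\rho_\gamma(y)}{|x-y|}\,dx\,dy.
\]
For the kinetic piece, the standard IMS identity $\sum_j\theta_j(-\Delta)\theta_j=-\Delta+\sum_j|\nabla\theta_j|^2$ (which holds because $\sum_j\theta_j^2=1$) and the cyclicity of the trace yield
\[
\sum_j T(\theta_j\gamma\theta_j)-T(\gamma)=\tfrac12\int\sum_j|\nabla\theta_j(x)|^2\rho_\gamma(x)\,dx.
\]

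The heart of the argument is the corresponding estimate for the M\"uller exchange,
\[
X(\gamma^{1/2})-\sum_j X\bigl((\theta_j\gamma\theta_j)^{1/2}\bigr)\le \tfrac12\int\sum_j|\nabla\theta_j|^2\rho_\gamma+\sum_{i<j}\iint\frac{\theta_i^2(x)\theta_j^2(y)|\gamma^{1/2}(x,y)|^2}{|x-y|}\,dx\,dy.
\]
Adding this to the three identities above reproduces the inequality of the lemma: the two $\tfrac12\int\sum_j|\nabla\theta_j|^2\rho_\gamma$ contributions combine into the full gradient term on the right-hand side, and the direct and exchange off-diagonal pieces merge into the factor $|\gamma^{1/2}(x,y)|^2-\rho_\gamma(x)\rho_\gamma(y)$.

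The main obstacle is precisely this exchange inequality, because the kernel of $(\theta_j\gamma\theta_j)^{1/2}$ is \emph{not} the same as $\theta_j(x)\gamma^{1/2}(x,y)\theta_j(y)$: taking an operator square root does not commute with conjugation by the multiplication operator $\theta_j$. Following the strategy of \cite[Lemma 6]{mullerIC}, the natural way forward is to set $A_j:=\gamma^{1/2}\theta_j$, so that $(\theta_j\gamma\theta_j)^{1/2}=|A_j|$, and compare $|A_j|$ with the naive candidate $\theta_j\gamma^{1/2}\theta_j$, which satisfies the operator identity $\theta_j\gamma\theta_j=(\theta_j\gamma^{1/2}\theta_j)^2+\theta_j\gamma^{1/2}(1-\theta_j^2)\gamma^{1/2}\theta_j$. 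Expanding $X(\gamma^{1/2})=\tfrac12\iint|\gamma^{1/2}(x,y)|^2/|x-y|\,dx\,dy$ via the partition $1=\sum_{i,j}\theta_i^2(x)\theta_j^2(y)$ isolates the off-diagonal Coulomb cross term on the right-hand side and reduces the estimate to controlling the difference, in the weighted Hilbert--Schmidt norm $\iint|\cdot|^2/|x-y|$, between $\theta_j\gamma^{1/2}\theta_j$ and $(\theta_j\gamma\theta_j)^{1/2}$. A commutator-type bound for $[\theta_j,\gamma^{1/2}]$ (essentially Hardy's inequality, exactly as used in Lemma \ref{exchangeineq}) produces the loss $\tfrac12\int\sum_j|\nabla\theta_j|^2\rho_\gamma$. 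Once this exchange inequality is established, combining it with the three preceding formulas completes the proof.
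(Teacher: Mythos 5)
Your decomposition of $\E_{\uR}$ into kinetic, one-body, direct and exchange pieces, and the identities you write for the first three, are correct: the potential term cancels exactly from $\sum_j\theta_j^2=1$ and $\rho_{\theta_j\gamma\theta_j}=\theta_j^2\rho_\gamma$; the direct term decomposes exactly with the off-diagonal cross term $\sum_{i<j}\iint\theta_i^2\theta_j^2\rho_\gamma\rho_\gamma/|x-y|$; and the IMS identity with cyclicity of the trace gives the kinetic error $\tfrac12\int\sum_j|\nabla\theta_j|^2\rho_\gamma$. You also correctly isolate the exchange inequality as the only nontrivial step. Note that the paper itself gives no proof of this lemma; it cites it verbatim from Frank--Nam--Van den Bosch (Lemma~6 of \cite{mullerIC}), so there is no in-paper argument to compare against.

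The gap is in the exchange step, which you only sketch, and whose proposed mechanism is dubious. You suggest comparing $\theta_j\gamma^{1/2}\theta_j$ with $(\theta_j\gamma\theta_j)^{1/2}$ via ``a commutator-type bound for $[\theta_j,\gamma^{1/2}]$, essentially Hardy's inequality''. But $\gamma^{1/2}$ is just a bounded nonnegative operator with no differential structure, so there is no a priori reason for $[\theta_j,\gamma^{1/2}]$ to be controlled by $\nabla\theta_j$, and the analogy with Lemma~\ref{exchangeineq} (which exploits the kernel $|x-y|^{-1}$ against $-\Delta$, not against a generic square root) does not carry over. Moreover, the algebraic identity $\theta_j\gamma\theta_j=(\theta_j\gamma^{1/2}\theta_j)^2+\theta_j\gamma^{1/2}(1-\theta_j^2)\gamma^{1/2}\theta_j$ compares \emph{squares}, and does not by itself yield a bound on the difference of the square roots in the Coulomb-weighted Hilbert--Schmidt norm. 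In short, the claim that this route ``produces the loss $\tfrac12\int\sum_j|\nabla\theta_j|^2\rho_\gamma$'' is asserted, not demonstrated.

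The clean way to close the gap is operator-theoretic and in fact needs no gradient error from the exchange at all. Expanding $X(\gamma^{1/2})$ with $1=\sum_{i,j}\theta_i^2(x)\theta_j^2(y)$ gives $X(\gamma^{1/2})=\sum_j X(\theta_j\gamma^{1/2}\theta_j)+\sum_{i<j}\iint\theta_i^2\theta_j^2|\gamma^{1/2}(x,y)|^2/|x-y|$, so it suffices to show $X(\theta_j\gamma^{1/2}\theta_j)\le X((\theta_j\gamma\theta_j)^{1/2})$ for each $j$. This follows from two facts: \textup{(i)} Hansen's operator Jensen inequality, applied to the operator-concave function $t\mapsto t^{1/2}$ and the contraction $\theta_j$, gives $\theta_j\gamma^{1/2}\theta_j\le(\theta_j\gamma\theta_j)^{1/2}$; \textup{(ii)} the functional $X$ is monotone on nonnegative operators, which one sees by writing $|x-y|^{-1}$ via its Fourier transform and noting that for $0\le A\le B$ and $A_p=e^{ipx}Ae^{-ipx}$ one has $\tr(A_pA)\le\tr(B_pA)\le\tr(B_pB)$. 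With this, the exchange contributes nothing to the gradient error and the lemma follows with even a factor $\tfrac12$ to spare on $\int\sum_j|\nabla\theta_j|^2\rho_\gamma$. Until you supply this (or another rigorous) argument for the exchange step, the proposal is an outline rather than a proof.
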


As in~\cite{RuskaiSolovej} we choose smooth localizing functions $0 \le \theta_j \in C^\infty(\R^3)$, $j = 0, \dots, K$ having the following properties.

\begin{enumerate}
  \setlength{\parskip}{0.1cm}
    \setlength{\itemsep}{0.1cm}
\item[(i)]
For $j \ge 1$ we have $\theta_j(x) = \theta (|x-R_j|/R_\mathrm{min})$, with smooth $\theta $ satisfying $0 \le \theta \le 1$ and $\theta(t) = 1$ if $t < 1/5$ and $\theta(t) = 0$ if $t > 1/4$.

\item[(ii)]
$\sum_{j=0}^K \theta_j(x)^2 = 1$ (which defines $\theta_0$).

 These properties imply
\item[(iii)]
$|\nabla \theta_j(x)| \le CR^{-1}_\mathrm{min}$ for all $j$.
\end{enumerate}

For any $M_1 + M_2 \le M$ we have
\[
E_\mathrm{atom}(M) \le E_\mathrm{atom}(M_1) + E_\infty (M_2).
\]
The proof of this is the same as Proposition~\ref{ifpart} (or, see~\cite[Lemma 2]{Kehle2017}).
Using Proposition~\ref{ifpart}, we have
\begin{equation*}
\begin{split}
\E_{\uR}(\gamma) + U_{\uR}
&\le \sum_{j=1}^K E_\mathrm{atom}(N_j, Z_j) \\
&\le \sum_{j=1}^K\left( E_\mathrm{atom}(N_j^{(1)}, Z_j)  + E_\infty (N_j^{(2)})\right)
\end{split}
\end{equation*}
for any minimizer $\gamma$ and for any $\sum_{j=1}^K (N_j^{(1)} + N_j^{(2)}) = N $.
We note that 
\[
\sum_{j=1}^K E_\infty (N_j^{(2)}) = - \sum_{j=1}^K\frac{N_j^{(2)}}{8}
= -\frac{N^{(2)}}{8}  = E_\infty (N^{(2)})
\]
and take $N_j^{(1)} = \tr (\theta_j \gamma \theta_j)$, $j=1, \dots, K$, and $N^{(2)} = \tr (\theta_0 \gamma \theta_0)$.
Then
\begin{equation}
\label{ineq:binding}
\E_{\uR}(\gamma) + U_{\uR}
\le \sum_{j=1}^K\E_\mathrm{atom}(\theta_j \gamma \theta_j) 
+ \E_\infty(\theta_0 \gamma \theta_0).
\end{equation}
Combining  (\ref{ineq:binding}) with the IMS-type formula in Lemma~\ref{IMS}, we get
\begin{equation}
\label{ineq:IMS}
\begin{split}
0 &\le
\sum_{j=1}^K \E_\mathrm{atom}(\theta_j \gamma \theta_j) +
\E_\infty(\theta_0 \gamma \theta_0) - \E_{\uR}(\gamma) -U_{\uR} \\
&=\sum_{ j=0}^K \E_{\uR}(\theta_j \gamma \theta_j) 
+ \tr (V_{\uR} \theta_0 \gamma \theta_0)
 - \E_{\uR}(\gamma) -U_{\uR} \\
&\quad + \sum_{1 \le i < j\le K} \left(\int_{\R^3} \frac{Z_i \theta_j(x)^2}{|x-R_j|} \rho_\gamma(x) \, dx 
+\int_{\R^3} \frac{Z_j \theta_i(x)^2}{|x-R_i|} \rho_\gamma(x) \, dx \, \right) \\
 &\le 
\int_{\R^3} \sum_{j=0}^K |\nabla \theta_j(x)|^2 \rho_\gamma(x) \, dx
 + \sum_{1\le i < j\le K} I_{ij} + \sum_{j=1}^K I_{0j},
 \end{split}
\end{equation}
where we have denoted
\begin{equation}
\label{error1}
\begin{split}
I_{ij}\coloneqq
&-\frac{Z_iZ_j}{|R_i-R_j|}
+\int_{\R^3} \frac{Z_i \theta_j(x)^2}{|x-R_j|} \rho_\gamma(x) \, dx \\
&\quad 
+\int_{\R^3} \frac{Z_j \theta_i(x)^2}{|x-R_i|} \rho_\gamma(x) \, dx \\
&\quad +\iint_{\R^3 \times \R^3}\frac{\theta_i(x)^2 (|\gamma^{1/2}(x, y)|^2 - \rho_\gamma(x)\rho_\gamma(y))\theta_j(y)^2}{|x-y|} \, dx \, dy
\end{split}
\end{equation}
and
\begin{align*}
I_{0j}&\coloneqq
\int_{\R^3} \frac{Z_j \theta_0(x)^2}{|x-R_j|} \rho_\gamma(x) \, dx \\
&\quad + \iint_{\R^3 \times \R^3}\frac{\theta_0(x)^2 (|\gamma^{1/2}(x, y)|^2 - \rho_\gamma(x)\rho_\gamma(y))\theta_j(y)^2}{|x-y|} \, dx \, dy
\end{align*}
For the first term in (\ref{ineq:IMS}) we learn from the property (iii) of the functions $\theta_j$ that
\begin{equation}
\label{ineq:kineticER}
\int_{\R^3} \sum_{j=0}^K |\nabla \theta_j(x)|^2 \rho_\gamma(x) \, dx
\le CNR^{-2}_\mathrm{min},
\end{equation}
where the constant $C$ depends on $K$.
In order to control the contributions from $I_{ij}$, we now define $N_1^\mathrm{TF}, \dots, N_K^\mathrm{TF}$ to be the positive numbers that minimize $\sum_{j=1}^K E_\mathrm{atom}^\mathrm{TF} (N_j^\mathrm{TF}, Z_j)$ under the constraint $\sum_{j=1}^K N_j^\mathrm{TF} =N$.
Then it is well-known that all the chemical potentials $\mu_\mathrm{atom}(N_j^\mathrm{TF}, Z_j)$ for the atoms will be identical
\[
\mu_\mathrm{atom}(N_j^\mathrm{TF}, Z_j) =\mu_\mathrm{mol}(N, \uZ, \infty), \quad j =1, \dots, K.
\]

\begin{lem}[Lemma 9 in~\cite{RuskaiSolovej}]
\label{lem:TF1}
Let $\rho_\mathrm{mol}^\mathrm{TF}$ be the TF density for the molecular system.
If $C Z^{-1/3} < R' < R_\mathrm{min}/2$ then we have for all $j=1, \dots, K$
\begin{equation}
\label{eq:TF1}
\int_{|x-R_j|<R'} \rho_\mathrm{mol}^\mathrm{TF}(x) \, dx = N_j^\mathrm{TF} + \mathcal{O}(R'^{-3}),
\end{equation}
and if $|x-R_j|>3R_\mathrm{min}/4$
\begin{equation}
\label{eq:TF2}
\int_{|y-R_j|<R'} \rho_\mathrm{mol}^\mathrm{TF}(y)|x-y|^{-1} \, dy = (N_j^\mathrm{TF} + \mathcal{O}(R'^{-3}))|x-R_j|^{-1}.
\end{equation}
\end{lem}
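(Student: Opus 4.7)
The plan is to exploit the well-known structural fact from Thomas-Fermi theory that on each ball $\{|x-R_j| < R_\mathrm{min}/2\}$ the molecular minimizer $\rho_\mathrm{mol}^\mathrm{TF}$ agrees, to high accuracy, with the atomic density $\rho_{\mathrm{at},j}^\mathrm{TF}$ of the neutral-or-ionic atom with nuclear charge $Z_j$ and electron number $N_j^\mathrm{TF}$. The key compatibility is built into the choice of the $N_j^\mathrm{TF}$: by construction the atomic chemical potentials $\mu_\mathrm{atom}(N_j^\mathrm{TF}, Z_j)$ all coincide with $\mu_\mathrm{mol}(N, \uZ, \infty)$, so $\phi_\mathrm{mol}^\mathrm{TF}$ and $\phi_{\mathrm{at},j}^\mathrm{TF}(\cdot - R_j)$ satisfy TF equations on $\{|x-R_j|<R_\mathrm{min}/2\}$ with matching threshold.

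The technical core, which I would address first, is to establish a pointwise screening estimate of the form
\begin{equation*}
\bigl|\phi_\mathrm{mol}^\mathrm{TF}(x) - \phi_{\mathrm{at},j}^\mathrm{TF}(x-R_j)\bigr| \le C|x-R_j|^{-4}
\end{equation*}
on the annulus $CZ^{-1/3} < |x-R_j| < R_\mathrm{min}/2$. This is the Sommerfeld-type barrier argument: the difference $\psi(x) = \phi_\mathrm{mol}^\mathrm{TF}(x) - \phi_{\mathrm{at},j}^\mathrm{TF}(x-R_j)$ solves $-\Delta \psi = 4\pi(\rho_{\mathrm{at},j}^\mathrm{TF} - \rho_\mathrm{mol}^\mathrm{TF})$ away from the nuclei, the TF relation $\rho^{2/3} = 2^{5/3}(6\pi^2)^{-2/3}[\phi - \mu]_+$ converts this into a semilinear inequality for $\psi$, and the explicit atomic $r^{-4}$ decay (together with the contributions $\mathcal{O}(|R_k - R_j|^{-1})$ from the other nuclei on the relevant ball) lets one use the TF analogue of Solovej's sub/super-solution method to dominate $\psi$ by a suitable multiple of $|x - R_j|^{-4}$.

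Granting that, part (\ref{eq:TF1}) follows by integrating $\rho^{5/3}_\mathrm{mol} = C[\phi_\mathrm{mol}^\mathrm{TF}-\mu]_+^{5/2}$ minus the analogous atomic quantity over $|x-R_j|<R'$: the potential estimate yields $\int_{|x-R_j|<R'}(\rho_\mathrm{mol}^\mathrm{TF}-\rho_{\mathrm{at},j}^\mathrm{TF})\,dx = \mathcal{O}(R'^{-3})$, and the remaining atomic integral equals $N_j^\mathrm{TF} - \mathcal{O}(R'^{-3})$ by the classical Sommerfeld tail bound for a TF atom. For part (\ref{eq:TF2}) I would split $\rho_\mathrm{mol}^\mathrm{TF} = \rho_{\mathrm{at},j}^\mathrm{TF} + (\rho_\mathrm{mol}^\mathrm{TF} - \rho_{\mathrm{at},j}^\mathrm{TF})$ inside the ball $|y-R_j|<R'$: since $\rho_{\mathrm{at},j}^\mathrm{TF}$ is radial about $R_j$, Newton's theorem gives exactly
\begin{equation*}
\int_{|y-R_j|<R'}\frac{\rho_{\mathrm{at},j}^\mathrm{TF}(y-R_j)}{|x-y|}\,dy = |x-R_j|^{-1}\int_{|y-R_j|<R'}\rho_{\mathrm{at},j}^\mathrm{TF}(y-R_j)\,dy
\end{equation*}
for $|x-R_j|>R'$, and the non-radial defect contributes $\mathcal{O}(R'^{-3})|x-R_j|^{-1}$ uniformly in the exterior region $|x-R_j|>3R_\mathrm{min}/4$ (using $|x-y|\ge |x-R_j| - R' \gtrsim |x-R_j|$ there). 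The main obstacle throughout is Step 2: obtaining the Sommerfeld-type screening estimate uniformly in the molecular geometry, since one must ensure that the perturbations produced by the distant nuclei really do decay like $|x-R_j|^{-4}$ after the TF nonlinearity is accounted for, and this is where the hypothesis $R' < R_\mathrm{min}/2$ is essential.
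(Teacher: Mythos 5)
The paper does not prove this lemma; it is cited verbatim from Ruskai and Solovej (Lemma~9 of \cite{RuskaiSolovej}), so there is no internal argument to compare against. Your reconstruction is in the right spirit --- Sommerfeld-type comparison of the molecular TF potential with the atomic one whose chemical potential has been matched, Newton's theorem for the radial part, and a crude bound on the non-radial defect in the exterior --- and those are indeed the ingredients used in \cite{RuskaiSolovej} and in Solovej's earlier diatomic work.

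However, the central estimate as you have stated it is too weak to give the conclusion, and this is a genuine gap. On the annulus $CZ^{-1/3}<|x-R_j|<R_\mathrm{min}/2$ both $\phi_\mathrm{mol}^\mathrm{TF}$ and $\phi_{\mathrm{at},j}^\mathrm{TF}(\cdot-R_j)$ are themselves of size $\Theta(|x-R_j|^{-4})$ by the Sommerfeld bound, so the inequality $|\phi_\mathrm{mol}^\mathrm{TF}-\phi_{\mathrm{at},j}^\mathrm{TF}|\le C|x-R_j|^{-4}$ carries no comparison content: it is implied by the two one-sided Sommerfeld bounds separately. Moreover, even granting it, the chain of implications to \eqref{eq:TF1} fails quantitatively: via the TF equation $\rho=c[\phi-\mu]_+^{3/2}$ (note: not $\rho^{5/3}$, as you wrote) and $|a^{3/2}-b^{3/2}|\lesssim\max(a,b)^{1/2}|a-b|$, your pointwise bound only yields $|\rho_\mathrm{mol}^\mathrm{TF}-\rho_{\mathrm{at},j}^\mathrm{TF}|\le C|x-R_j|^{-6}$ on the annulus, and
\[
\int_{CZ^{-1/3}<|x-R_j|<R'}|x-R_j|^{-6}\,dx \sim C Z ,
\]
which is dominated by the inner cutoff and is $\mathcal{O}(Z)$, not $\mathcal{O}(R'^{-3})$. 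You also do not address the inner ball $|x-R_j|<CZ^{-1/3}$ at all, where the bulk of the electrons sit. What is actually needed is a comparison estimate with a genuine gain over the Sommerfeld size --- e.g. $|\phi_\mathrm{mol}^\mathrm{TF}-\phi_{\mathrm{at},j}^\mathrm{TF}|\le C|x-R_j|^{-4}\,(Z^{1/3}|x-R_j|)^{-\zeta}$ for some $\zeta>0$ --- or, more cleanly, a screened-charge argument via Gauss's law: writing $Q_j(r)=Z_j-\int_{|x-R_j|<r}\rho_\mathrm{mol}^\mathrm{TF}$ one has $Q_j(r)=-r^2\,\partial_r\bar\phi_\mathrm{mol}^\mathrm{TF}(R_j;r)$ for $r<R_\mathrm{min}/2$, so that the Sommerfeld \emph{asymptotics} (matching leading constants between molecule and atom, not merely matching upper bounds) controls $Q_j(R')-(Z_j-N_j^\mathrm{TF})$ directly. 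As it stands, Step~2 of your outline is both unproved and, in the form stated, insufficient.

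Your treatment of \eqref{eq:TF2} via Newton's theorem for the radial atomic piece is fine, but it inherits the same deficiency: the claimed bound on the non-radial defect relies on the very charge estimate for $\rho_\mathrm{mol}^\mathrm{TF}-\rho_{\mathrm{at},j}^\mathrm{TF}$ that the gap above leaves unestablished.
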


Also we will need the

\begin{lem}[Proposition 10 in~\cite{RuskaiSolovej}]
\label{lem:TF2}
If $\mu_\mathrm{mol}(N, \uZ, \infty)>0$ then there are positive constants $\kappa, \kappa' > 0$ depending on $Z_1, \dots, Z_K$ such that
\begin{equation*}
\kappa < \frac{Z_j - N_j^\mathrm{TF}}{Z_i - N_i^\mathrm{TF}} < \kappa'
\end{equation*}
for all $i \neq j$.
If $\mu_\mathrm{mol}(N, \uZ, \infty)=0$ then $Z_j = N_j^\mathrm{TF}$.
\end{lem}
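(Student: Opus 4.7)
The plan is to exploit the scaling invariance of the atomic Thomas--Fermi functional together with the fact that the minimizing decomposition $(N_1^\mathrm{TF}, \dots, N_K^\mathrm{TF})$ forces all atoms to share a common chemical potential $\mu := \mu_\mathrm{mol}(N, \uZ, \infty)$. The starting point is the standard scaling relation $\mu_\mathrm{atom}(N, Z) = Z^{4/3}\tilde\mu(N/Z)$, where $\tilde\mu : [0,1] \to [0, +\infty]$ is a universal continuous function, strictly decreasing, with $\tilde\mu(0) = +\infty$ and $\tilde\mu(1) = 0$. Writing $\xi_j := N_j^\mathrm{TF}/Z_j$, the identity $\mu_\mathrm{atom}(N_j^\mathrm{TF}, Z_j) = \mu$ inverts to $\xi_j = \tilde\mu^{-1}\bigl(\mu Z_j^{-4/3}\bigr)$.

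For the case $\mu = 0$, I would invoke the fact that in Thomas--Fermi theory $\mu_\mathrm{atom}(N, Z) = 0$ characterizes precisely the neutral atom (so $\xi_j = 1$), yielding $N_j^\mathrm{TF} = Z_j$ immediately. For $\mu > 0$, I would set $h(t) := 1 - \tilde\mu^{-1}(t)$, which is strictly increasing and continuous on $(0, +\infty)$ with $h(0^+) = 0$ and $h(+\infty) = 1$, and write
\[
Z_j - N_j^\mathrm{TF} = Z_j\, h\bigl(\mu Z_j^{-4/3}\bigr), \qquad
\frac{Z_j - N_j^\mathrm{TF}}{Z_i - N_i^\mathrm{TF}} = \frac{Z_j}{Z_i}\, \frac{h\bigl(\mu Z_j^{-4/3}\bigr)}{h\bigl(\mu Z_i^{-4/3}\bigr)}.
\]
Then I would view this ratio as a continuous function of $\mu \in (0, +\infty)$. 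As $\mu \to +\infty$ both $h$-values tend to $1$ and the ratio tends to $Z_j/Z_i$, while as $\mu \to 0^+$ the classical Sommerfeld asymptotics for the Thomas--Fermi atom near neutrality imply $h(t) \sim c\, t^{\alpha}$ with a universal exponent $\alpha > 0$, so the ratio converges to $(Z_j/Z_i)^{1 - 4\alpha/3}$, which is finite and strictly positive. By continuity on the two-point compactification $[0, +\infty]$ one obtains uniform upper and lower positive bounds $\kappa, \kappa'$ depending only on $Z_1, \dots, Z_K$.

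The main obstacle is the asymptotic behavior at $\mu \to 0^+$: one needs a quantitative statement that the excess charges $Z_j - N_j^\mathrm{TF}$ for different atoms, all driven to zero by the common vanishing chemical potential, do so at comparable rates. This is what Sommerfeld's classical analysis provides, but its careful application—connecting the rate $h(t) \sim c t^\alpha$ uniformly for all atoms via the single universal function $\tilde\mu$—is the delicate step. The other case $\mu = 0$ and the large-$\mu$ limit are essentially routine, and the whole argument reduces the assertion to a soft compactness statement about a single continuous function on $[0, +\infty]$.
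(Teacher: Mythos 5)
The paper does not prove Lemma~\ref{lem:TF2}; it is cited directly as Proposition~10 of Ruskai--Solovej~\cite{RuskaiSolovej}, so there is no in-paper argument to compare against, and I can only assess your proposal on its own merits. Your proposed proof is correct and takes the natural route. The reduction via the scaling $\mu_\mathrm{atom}(N, Z) = Z^{4/3}\tilde\mu(N/Z)$, the equalization of atomic chemical potentials at the minimizing decomposition $(N_1^\mathrm{TF}, \dots, N_K^\mathrm{TF})$ (which follows from the Lagrange condition together with $\tilde\mu(0)=+\infty$, ruling out any $N_j^\mathrm{TF}=0$), and the rewriting of the ratio as $\frac{Z_j}{Z_i}\,\frac{h(\mu Z_j^{-4/3})}{h(\mu Z_i^{-4/3})}$ with $h = 1 - \tilde\mu^{-1}$ are all correct, and the compactification argument covers $\mu$ bounded away from $0$ and $\mu\to\infty$.

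You have also correctly isolated the one genuinely delicate point: quantitative control of $h$ near $0$. This is essential; if, say, $h(t) \sim e^{-1/t}$, then $h(\mu Z_j^{-4/3})/h(\mu Z_i^{-4/3}) = \exp\bigl(-(Z_j^{4/3}-Z_i^{4/3})/\mu\bigr)$ would degenerate as $\mu\to 0^+$. The leading-order asymptotics $h(t) \sim c\,t^{3/4}$ as $t\to0^+$ can, however, be obtained more directly than by invoking the full Sommerfeld expansion. For a Thomas--Fermi positive ion the screened potential equals $(Z-N)/r$ outside the support radius $r_0$, forcing $\mu = (Z-N)/r_0$; and the $r^{-6}$ decay of the neutral Thomas--Fermi density gives, in dimensionless variables, an excess charge of order $x_0^{-3}$, so $1-N/Z \asymp x_0^{-3}$. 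Together these yield $\tilde\mu \asymp (1-N/Z)^{4/3}$, hence $\alpha=3/4$ and $1-4\alpha/3 = 0$, so the ratio actually tends to $1$ as $\mu\to 0^+$. You should cite a precise reference for this tail estimate (e.g.\ Lieb's review~\cite{LiebTF} or Lieb--Simon~\cite{LiebSimon1977}) rather than only invoking ``Sommerfeld asymptotics.'' Finally, note that for the compactness argument two-sided power bounds $c_1 t^{3/4}\le h(t)\le c_2 t^{3/4}$ near $0$ already suffice; a sharp asymptotic equivalence is not needed, since one only requires $\liminf>0$ and $\limsup<\infty$.
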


In order to compare with Thomas-Fermi theory, we use the

\begin{lem}
\label{lem:inside}
Let $\beta > 0$ and $R(Z) = (\beta Z^{-1/3(1- \alpha)})$ with $\alpha < \epsilon = 2/77$ in the previous bound (\ref{ineq:compTF}).
For any fixed $1 \le j \le K$ let $\lambda(x)$ be a function satisfying
\begin{enumerate}
\item[(a)]
$\lambda \in C^{\infty}(\R^3)$ with $0 \le \lambda (x) \le 1$,

\item[(b)]
$\mathrm{supp} \, \lambda \subset \{ x \colon |x-R_j| < R(Z)\}$.

\end{enumerate}

Then there exist $C>0$ and $a > 0$ such that for all small $\alpha < \epsilon$, 

\begin{enumerate}
\item[(i)]

\begin{equation}
\label{eq:inside1}
\left| \int_{\R^3} (\rho_\gamma(x) - \rho_\mathrm{mol}^\mathrm{TF}(x)) \lambda(x) \, dx \right| \le C Z^{(1-a)}.
\end{equation}

\item[(ii)]
If $|y-R_j| > R(Z)$, we have
\begin{equation}
\label{eq:inside2}
\left| \int_{\R^3} \frac{\rho_\gamma(x) - \rho_\mathrm{mol}^\mathrm{TF}(x)}{|x-y|}  \lambda(x) \, dx \right|
\le CZ^{1-a}  |y-R_j|^{-1}.
\end{equation}
\end{enumerate}
\end{lem}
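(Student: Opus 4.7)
The plan is to exploit the Coulomb Cauchy--Schwarz inequality together with the a priori bound (\ref{ineq:compTF}), namely $D[\rho_\gamma - \rho_\mathrm{mol}^\mathrm{TF}] \le CZ^{25/11}$. For a smooth, decaying test function $\psi$, setting $\mu_\psi \coloneqq -(4\pi)^{-1}\Delta\psi$ makes $\psi = \mu_\psi \star |x|^{-1}$, and one computes
\[
\int_{\R^3}(\rho_\gamma - \rho_\mathrm{mol}^\mathrm{TF})\psi\,dx = 2D(\rho_\gamma - \rho_\mathrm{mol}^\mathrm{TF}, \mu_\psi) \le 2\sqrt{D[\rho_\gamma - \rho_\mathrm{mol}^\mathrm{TF}]}\,\sqrt{D[\mu_\psi]},
\]
with $D[\mu_\psi] = (8\pi)^{-1}\int|\nabla\psi|^2\,dx$ after an integration by parts. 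The entire estimate thus reduces to controlling $\int|\nabla\psi|^2$ for the two choices of $\psi$ below, for which I shall use the implicit (and standard) smooth-cutoff bound $|\nabla\lambda| \le CR(Z)^{-1}$.

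For part (i) I take $\psi=\lambda$. Since $|\nabla\lambda|\le CR(Z)^{-1}$ and $|\mathrm{supp}\,\lambda| \le CR(Z)^3$, this gives $\int|\nabla\lambda|^2 \le CR(Z)$ and hence
\[
\left|\int(\rho_\gamma - \rho_\mathrm{mol}^\mathrm{TF})\lambda\,dx\right| \le CZ^{25/22}R(Z)^{1/2} = CZ^{25/22 - (1-\alpha)/6}.
\]
A short arithmetic check identifies the exponent as $1-a$ with $a = 1/33 - \alpha/6$, which is strictly positive for all $\alpha < \epsilon = 2/77$ (indeed $a = 2/77$ precisely at $\alpha = \epsilon$). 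This is exactly the constraint linking the admissible $\alpha$ to the Thomas--Fermi radius exponent obtained in the previous section.

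For part (ii) I apply the same Coulomb argument to $\psi_y(x) = \lambda(x)|x-y|^{-1}$. The hypothesis $|y-R_j|>R(Z)$ combined with $\mathrm{supp}\,\lambda \subset \{|x-R_j|<R(Z)\}$ ensures $\lambda(y)=0$, so no delta-distribution arises from $\Delta|x-y|^{-1}$ and $\psi_y$ is smooth. Direct differentiation yields $|\nabla\psi_y|^2 \le 2|\nabla\lambda|^2|x-y|^{-2} + 2\lambda^2|x-y|^{-4}$. When $|y-R_j|\ge 2R(Z)$, one has $|x-y|\ge |y-R_j|/2$ on $\mathrm{supp}\,\lambda$, so $\int|\nabla\psi_y|^2 \le CR(Z)|y-R_j|^{-2}$, and Coulomb Cauchy--Schwarz delivers the bound $CZ^{1-a}|y-R_j|^{-1}$. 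The remaining annular regime $R(Z)<|y-R_j|<2R(Z)$ is treated by the decomposition $|x-y|^{-1} = |y-R_j|^{-1} + (|x-y|^{-1} - |y-R_j|^{-1})$ on $\mathrm{supp}\,\lambda$: the constant piece reduces to part (i) multiplied by $|y-R_j|^{-1}$, while the correction is of size $O(R(Z)|y-R_j|^{-2})$ on the support, to which the same Coulomb argument applies.

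The main obstacle is the precise numerical balance between the factor $Z^{25/22}$ coming from (\ref{ineq:compTF}) and the factor $\sqrt{R(Z)} = Z^{-(1-\alpha)/6}$ coming from the gradient bound; this is what pins down the threshold $\epsilon = 2/77$ and forces the hypothesis $\alpha<\epsilon$. A secondary technical difficulty is controlling the singular factor $|x-y|^{-1}$ in $\psi_y$ when $y$ lies within distance $O(R(Z))$ of $\mathrm{supp}\,\lambda$, which is exactly what necessitates the separate treatment of the thin annulus.
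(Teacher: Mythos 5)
Your approach is genuinely different from the paper's: where the paper invokes the screened Coulomb potential estimate of Frank--Nam--Van Den Bosch (Lemma~\ref{lem:Coulomb}), Newton's averaging formula for (i), and a maximum-principle argument via harmonicity of $\Phi_r$ outside the ball for (ii), you work purely with the Coulomb Cauchy--Schwarz inequality applied to explicit test functions. Part (i) of your argument and the far-field regime $|y-R_j|\ge 2R(Z)$ of part (ii) are correct, and your arithmetic giving $a=1/33-\alpha/6>0$ is right. But note that your route \emph{adds} a hypothesis: you rely on $|\nabla\lambda|\le CR(Z)^{-1}$, which appears nowhere in the statement of the lemma (only $\lambda\in C^\infty$, $0\le\lambda\le1$, and the support condition are assumed), and your bound $\int|\nabla\lambda|^2\le CR(Z)$ fails for an oscillatory $\lambda$. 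It happens to hold for the $\tilde\theta_j$ to which the lemma is later applied, so this is a restriction rather than an error in context, but it does mean you prove a narrower claim than stated, and it is precisely where your better exponent comes from. The paper's proof never touches $\nabla\lambda$.

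The genuine gap is the annular regime $R(Z)<|y-R_j|<2R(Z)$ in part (ii). Your decomposition $|x-y|^{-1}=|y-R_j|^{-1}+(|x-y|^{-1}-|y-R_j|^{-1})$ handles the constant piece correctly, but the claim that the correction is ``of size $O(R(Z)|y-R_j|^{-2})$ on the support, to which the same Coulomb argument applies'' is not justified, and the Coulomb Cauchy--Schwarz estimate does not in fact close. Writing $\psi^{\mathrm{corr}}_y(x)=\lambda(x)\bigl(|x-y|^{-1}-|y-R_j|^{-1}\bigr)$, one contribution to $\int|\nabla\psi^{\mathrm{corr}}_y|^2$ is $\int_{B}\lambda^2|x-y|^{-4}\,dx$. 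Since $|x-y|^{-4}$ is not locally integrable in $\R^3$, this integral scales like $1/d$ with $d\coloneqq|y-R_j|-R(Z)$ the distance from $y$ to the ball; for $y$ just outside the support this is unbounded and certainly not $\lesssim R(Z)|y-R_j|^{-2}\sim R(Z)^{-1}$. The decay $\lambda(x)\lesssim(R(Z)-|x-R_j|)/R(Z)$ forced by the gradient bound only improves the constant, not the $1/d$ rate (one computes $\int_0^{R(Z)}t^2(d+t)^{-4}\,dt\sim d^{-1}$). Likewise the pointwise bound ``$O(R(Z)|y-R_j|^{-2})$'' fails: $|x-y|^{-1}$ can be as large as $d^{-1}$. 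The paper sidesteps all of this by observing that $\Phi_{R(Z)}$ is harmonic for $|x|>R(Z)$, so Lemma~\ref{lem:maximumPR} propagates the boundary estimate to every such $x$ at once with no near-boundary blow-up. You either need that maximum-principle step, or you need to restrict to $\lambda$ supported in, say, $\{|x-R_j|<R(Z)/4\}$ (as is the case for the paper's $\tilde\theta_j$), so that $|y-R_j|>R(Z)$ already forces $|x-y|\ge 3R(Z)/4$ and the annular regime never arises.
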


For the proof we need the following lemma for the Coulomb potential (see~\cite[Lemma 18]{TFDWIC}).

\begin{lem}[Coulomb potential estimate]
\label{lem:Coulomb}
For every $f \in L^{5/3}(\R^3) \cap L^{6/5}(\R^3)$ and $x \in \R^3$, we have
\begin{equation*}
\left| \int_{|y| < |x|} \frac{f(y)}{|x-y|} \, dy \right| \le C \| f \|_{L^{5/3}}^{5/6}(|x|D(f))^{1/12}.
\end{equation*}
\end{lem}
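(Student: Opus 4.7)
The plan is to prove the claimed bound by interpolating between two simpler estimates for the same integral: a Hölder-type bound controlled by $\|f\|_{L^{5/3}}$ and $|x|$, and a Cauchy--Schwarz-type bound controlled by $\|f\|_{L^{5/3}}$ and $D(f)$. Matching the exponents $5/12$ and $7/12$ of the two auxiliary bounds reproduces exactly the exponents $5/6$ and $1/12$ on the right-hand side of the lemma.

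For the Hölder bound, I would apply Hölder's inequality with conjugate exponents $5/3$ and $5/2$; since $B_{|x|}(0)\subset B_{2|x|}(x)$ we have $\int_{|y|<|x|}|x-y|^{-5/2}\,dy\le 8\pi\sqrt{2|x|}$, hence
\[
\left|\int_{|y|<|x|}\frac{f(y)}{|x-y|}\,dy\right|\le C\|f\|_{L^{5/3}}|x|^{1/5}. \qquad (\mathrm{A})
\]

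For the Coulomb-energy bound, take $f\ge 0$ (the general case follows by applying the argument to $|f|$), fix $t>0$, and introduce a smooth, nonnegative, radial bump $\rho_t$ supported in $B_t(x)$ with $\int\rho_t=1$. By Newton's theorem its Newtonian potential satisfies $u_{\rho_t}(y)=1/|x-y|$ whenever $|x-y|\ge t$, while $\|u_{\rho_t}\|_{L^\infty}\le C/t$ and $D(\rho_t)\le C/t$. I would then write
\[
\int_{|y|<|x|}\frac{f(y)}{|x-y|}\,dy = 2D\bigl(f\1_{|y|<|x|},\rho_t\bigr) + \int_{|y|<|x|,\,|x-y|<t}f(y)\Bigl[\tfrac{1}{|x-y|}-u_{\rho_t}(y)\Bigr]\,dy,
\]
since the two integrands agree outside $B_t(x)$. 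The first term is bounded, using the Cauchy--Schwarz inequality for the Coulomb bilinear form together with $D(f\1_{|y|<|x|})\le D(f)$, by $2D(f)^{1/2}D(\rho_t)^{1/2}\le CD(f)^{1/2}t^{-1/2}$; the second is bounded, via Hölder and $\|u_{\rho_t}\|_{L^\infty}\le C/t$, by $C\|f\|_{L^{5/3}}t^{1/5}$. Optimizing the sum at $t\asymp(D(f)^{1/2}/\|f\|_{L^{5/3}})^{10/7}$ yields
\[
\left|\int_{|y|<|x|}\frac{f(y)}{|x-y|}\,dy\right|\le C\|f\|_{L^{5/3}}^{5/7}D(f)^{1/7}. \qquad (\mathrm{B})
\]

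Denoting by $\phi$ the integral on the left of the lemma and combining $(\mathrm{A})$ and $(\mathrm{B})$ through the geometric mean $\phi=\phi^{5/12}\cdot \phi^{7/12}$, I obtain
\[
|\phi|\le\bigl(C\|f\|_{L^{5/3}}|x|^{1/5}\bigr)^{5/12}\bigl(C\|f\|_{L^{5/3}}^{5/7}D(f)^{1/7}\bigr)^{7/12}=C\|f\|_{L^{5/3}}^{5/6}(|x|D(f))^{1/12},
\]
which is the claim. The main obstacle is the derivation of $(\mathrm{B})$; the key trick that makes it work cleanly is the use of the radial bump $\rho_t$ centered precisely at $x$, whose Newtonian potential reproduces $1/|x-y|$ outside the bump by Newton's theorem, thereby recasting the restricted integral as a Coulomb pairing amenable to Cauchy--Schwarz in $D$ and circumventing the need to regularize the sharp cutoff $\1_{|y|<|x|}$, whose boundary passes arbitrarily close to $x$.
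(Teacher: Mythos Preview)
The paper does not prove this lemma itself; it is quoted from \cite[Lemma~18]{TFDWIC}. So there is no argument in the paper to compare against, and I comment only on the correctness of your proposal.

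Bound~(A), the interpolation step, and the derivation of~(B) for $f\ge 0$ are all fine. The gap is the reduction ``the general case follows by applying the argument to $|f|$''. Carrying the argument out for $|f|$ produces
\[
\left|\int_{|y|<|x|}\frac{f(y)}{|x-y|}\,dy\right|\le C\|f\|_{L^{5/3}}^{5/6}\bigl(|x|\,D(|f|)\bigr)^{1/12},
\]
with $D(|f|)$ in place of $D(f)$. Since $D(|f|)=D(f_+)+D(f_-)+2D(f_+,f_-)$ while $D(f)=D(f_+)+D(f_-)-2D(f_+,f_-)$ and $D(f_+,f_-)\ge 0$, one has $D(|f|)\ge D(f)$, with strict inequality whenever $f$ changes sign; so the reduction does \emph{not} recover the stated bound. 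The place where positivity is actually used is the step $D(f\1_{|y|<|x|})\le D(f)$ in your proof of~(B): for sign-changing $f$ this inequality can fail --- take $f$ to be a positive bump supported just inside the sphere $\{|y|=|x|\}$ minus a nearby positive bump just outside; then $D(f)$ can be made small by bringing the bumps together, while $D(f\1_{|y|<|x|})$ equals the self-energy of the inner bump and stays bounded below.

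Two remarks. First, in the paper's only use of the lemma (the proof of Lemma~\ref{lem:inside}) the function $f$ carries the factor $\lambda$, which is supported in $\{|y-R_j|<R(Z)\}$; after the translation this means $\mathrm{supp}\,f\subset\{|y|<|x|\}$, so $f\1_{|y|<|x|}=f$ and your Cauchy--Schwarz step $|2D(f\1_{|y|<|x|},\rho_t)|\le 2D(f)^{1/2}D(\rho_t)^{1/2}$ holds without any appeal to positivity. Thus your argument does prove exactly what the paper needs. Second, for the lemma in the generality stated one must avoid the inequality $D(f\1_{|y|<|x|})\le D(f)$ altogether; this is handled in the cited reference.
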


\begin{proof}[Proof of Lemma~\ref{lem:inside}]
First, we introduce a function
\[
\Phi_{r}(x)
\coloneqq
 \int_{|y| < r} 
\frac{f(y)}{|x-y|} \, dy.
\]
Applying the Coulomb potential estimate with $f(y) = (\rho_\gamma(y+R_j) - \rho_\mathrm{mol}^\mathrm{TF}(y+R_j)) \lambda(y+R_j)$, we have
\begin{equation*}
\begin{split}
|\Phi_{|x|}(x)|
&=
\left| \int_{|y-R_j| < |x|} 
\frac{\rho_\gamma(y) - \rho_\mathrm{mol}^\mathrm{TF}(y)}{|x-(y-R_j)|} \lambda(y) \, dy \right|\\
&\le
 C \| f \|_{L^{5/3}}^{5/6}(|x|D(f))^{1/12}.
\end{split}
\end{equation*}
By Newton's theorem, we have
\begin{align*}
&\int_{{|y-R_j| < R(Z) } }
 (\rho_\gamma(y) -\rho_\mathrm{mol}^\mathrm{TF}(y))\lambda(y)  \, dy \\
 &\quad = R(Z)  \int_{\S^2} \frac{d \nu}{4\pi} \int_{|y-R_j| < R(Z) }
  \frac{\rho_\gamma(y) - \rho_\mathrm{mol}^\mathrm{TF}(y) }{|R(Z) \nu - (y-R_j)|}\lambda(y) \, dy \\
  &\quad = R(Z)\int_{\S^2} \frac{d \nu}{4\pi}\Phi_{R(Z)}(R(Z) \nu)\\
  &\quad \le CR(Z)^{13/12} \|  \rho_\gamma -\rho_\mathrm{mol}^\mathrm{TF} \|_{L^{5/3}}^{5/6}\left(D\left[\rho_\gamma -\rho_\mathrm{mol}^\mathrm{TF} \right] \right)^{1/12}.
\end{align*}
Combining this with  (\ref{ineq:compTF}) and the kinetic estimates
\[
\int_{\R^3} \rho_\gamma(x)^{5/3} \, dx \le CZ^{7/3}, \quad \int_{\R^3} \rho_\mathrm{mol}^\mathrm{TF}(x)^{5/3} \, dx \le CZ^{7/3},
\]
we find
\begin{align*}
\left|\int_{\R^3}
 (\rho_\gamma(y) -\rho_\mathrm{mol}^\mathrm{TF}(y)) \lambda(y) \, dy \right| 
 \le
 CR(Z)^{13/12} Z^{179/132}.
\end{align*}
Since $179/132 = 49/36 - 1/198$, we have
\[
\left|\int_{\R^3}
 (\rho_\gamma(y) -\rho_\mathrm{mol}^\mathrm{TF}(y)) \lambda(y) \, dy \right| 
 \le  C\beta^{13/12} Z^{1-1/198 + 13\alpha/36}.
\]
Thus if we choose $\alpha < 2/143$, the conclusion (i) follows.

Next, we use the well-known property for subharmonic functions (see~\cite[Lemma 6.5]{TFDWIC}).
\begin{lem}
\label{lem:maximumPR}
Let $f$ be a real-valued function on $\R^3$.
If $f$ is subharmonic for $|x| > r$, continuous for $|x| \ge r$, and vanishing at infinity, then we have
\[
\sup_{|x| \ge r} |x| f(y) = \sup_{|x| = r} |x|f(x).
\]
\end{lem}

We note that $ -\Delta \Phi_r(x) = \1_{|x| < r} (x)f(x)$ and thus harmonic for $|x| > r$.
From  the Coulomb estimate with $r = R(Z)$ and $\pm f(y) = \pm(\rho_\gamma(y+R_j) - \rho_\mathrm{mol}^\mathrm{TF}(y+R_j
) )\lambda(y+R_j)$ we conclude that, on $|y-R_j| > R(Z)$,

\begin{align*}
\left| \int_{\R^3} \frac{\rho_\gamma(x) - \rho_\mathrm{mol}^\mathrm{TF}(x)}{|x-y|} \lambda(x)  \, dx \right|
&\le CZ^{49/36 - 1/198} |y-R_j|^{-1} R(Z)^{13/12} \\
&\le CZ^{1-a}  |y-R_j|^{-1},
\end{align*}
which shows (ii).
\end{proof}

For applying Lemma~\ref{lem:TF1} and Lemma~\ref{lem:inside} we choose $\alpha$ and $\beta$ so that $R_\mathrm{min} > 3 R(Z)$.
If we define $\tilde \theta_j(x) = \theta(|x-R_j|/R(Z))$ for $j \ge 1$ then
\begin{equation}
\label{eq:break1}
\begin{split}
\int_{\R^3} \tilde \theta_j(x)^2 \rho_\gamma(x) \, dx
&= \int_{\R^3} \tilde \theta_j(x)^2 (\rho_\gamma(x) - \rho_\mathrm{mol}^\mathrm{TF}(x) )\, dx  \\
&\quad + \int_{\R^3} \tilde \theta_j(x)^2 \rho_\mathrm{mol}^\mathrm{TF}(x) \, dx \\
&= N_j^\mathrm{TF} + o(Z).
\end{split}
\end{equation}
Thus, since $\sum_{j=1}^K N_j^\mathrm{TF} = N$, we conclude from (\ref{eq:TF1}) and (\ref{eq:inside1}) that
\begin{equation}
\label{eq:density}
\begin{split}
0 &\le \sum_{j=1}^K \int_{\R^3}\rho_\gamma(x) ( \theta_j(x)^2 - \tilde \theta_j(x)^2 )\, dx \\
&\le  \int_{\R^3}\rho_\gamma(x) \left( 1 - \sum_{j=1}^K\tilde \theta_j(x)^2 \right)\, dx 
= o(Z).
\end{split}
\end{equation}
We also get from (\ref{eq:TF2}) and (\ref{eq:inside2})  that
\begin{equation*}
\label{eq:break2}
\begin{split}
\int_{\R^3}\frac{ \tilde \theta_j(x)^2 \rho_\gamma(x) }{|x-R_i|}\, dx
= \frac{N_j^\mathrm{TF} + o(Z)}{|R_i-R_j|}.
\end{split}
\end{equation*}
Using these estimates, we may find

\begin{equation}
\label{eq:break3}
\begin{split}
\int_{\R^3}\frac{ \theta_j(x)^2 \rho_\gamma(x) }{|x-R_i|}\, dx 
&= \int_{\R^3}\frac{\tilde \theta_j(x)^2 \rho_\gamma(x) }{|x-R_i|}\, dx  \\
&\quad + \int_{\R^3}\frac{(\theta_j(x)^2 - \tilde \theta_j(x)^2) \rho_\gamma(x) }{|x-R_i|}\, dx \\
&= \frac{N_j^\mathrm{TF} + o(Z)}{|R_i-R_j|}.
\end{split}
\end{equation}

Next, we estimate the error terms for the direct part of $I_{ij}$.
Combining the above estimates with (\ref{eq:TF2}) in Lemma~\ref{lem:TF1},
\begin{equation*}
\label{ineq:direct}
\begin{split}
&\iint_{\R^3\times \R^3} \frac{\theta_i(x)^2 \theta_j(y)^2 \rho_\gamma(x) \rho_\gamma(y) }{|x-y|} \, dx \, dy \\
&\quad \ge \iint_{\R^3 \times \R^3} \frac{\tilde \theta_i(x)^2 \theta_j(y)^2 \rho_\gamma(x) \rho_\gamma(y) }{|x-y|} \, dx \, dy \\
&\quad \ge
\iint_{\R^3\times \R^3} \frac{\tilde \theta_i(x)^2 \theta_j(y)^2 \rho_\mathrm{mol}^\mathrm{TF}(x) \rho_\gamma(y) }{|x-y|} \, dx \, dy  \\
&\quad \quad  -CZ^{1-a}\int_{\R^3}  \frac{\rho_\gamma(x) \theta_j(x)^2}{|x-R_i|} \, dx \\
&\quad \ge 
(N_i^\mathrm{TF} + o(Z))\int_{\R^3}  \frac{\rho_\gamma(x) \theta_j(x)^2}{|x-R_i|} 
dx.\end{split}
\end{equation*}

Together with (\ref{eq:break3}), we obtain
\begin{align*}
\iint_{\R^3\times \R^3} &\frac{\theta_i(x)^2 \theta_j(y)^2 \rho_\gamma(x) \rho_\gamma(y) }{|x-y|} \, dx \, dy \\
&\ge \frac{ (N_i^\mathrm{TF} + o(Z))   (N_j^\mathrm{TF} + o(Z))}{|R_i-R_j|}.
\end{align*}

For the exchange term in (\ref{error1}), we simply use
\begin{equation*}
\label{error:exchange1}
\begin{split}
\iint_{\R^3 \times \R^3}&\frac{\theta_j(x)^2 (|\gamma^{1/2}(x, y)|^2\theta_i(y)^2}{|x-y|} \, dx \, dy \\
&\le \frac{2}{|R_i-R_j|} \iint_{\R^3 \times \R^3}\theta_j(x)^2 |\gamma^{1/2}(x, y)|^2\, dx \, dy \\
&=  \frac{2}{|R_i-R_j|} \int_{\R^3}\theta_j(x)^2 \rho_\gamma(x)\, dx\\
&\le \frac{N_j^\mathrm{TF} + o(Z)}{|R_i-R_j|}o(Z)
\end{split}
\end{equation*}
by (\ref{eq:break1}) and (\ref{eq:density}).

Thus we arrive at the following estimate for the interaction of two screened nuclei
\begin{equation}
\label{error:conc1}
\begin{split}
I_{ij}
\le
& \frac{-(Z_i - N_i^\mathrm{TF} + o(Z))   (Z_j - N_j^\mathrm{TF} + o(Z))}{|R_i-R_j|}.
\end{split}
\end{equation}

Repeating these arguments, we see
\begin{equation}
\label{error:conc2}
\begin{split}
I_{0j}
\le
\frac{(Z_j - N_j^\mathrm{TF} + o(Z)) o(Z) }{|R_i-R_j|}.
\end{split}
\end{equation}
Inserting the estimates (\ref{ineq:kineticER}), (\ref{error:conc1}) and (\ref{error:conc2}) into (\ref{ineq:IMS}), we get
\begin{equation*}
\begin{split}
0
&\ge
 \sum_{1 \le i < j \le K}\frac{(Z_i - N_i^\mathrm{TF} + o(Z))   (Z_j - N_j^\mathrm{TF} + o(Z))}{|R_i-R_j|} \\
&\quad -CZ^{1+1/3(1-\epsilon)} R_\mathrm{min}^{-1}.
\end{split}
\end{equation*}
If we write $R_\mathrm{min} = |R_{i_0} - R_{j_0}|$ then
\begin{equation*}
\begin{split}
(Z_{i_0} &- N_{i_0}^\mathrm{TF})(Z_{j_0} - N_{j_0}^\mathrm{TF})R^{-1}_\mathrm{min}\\
&\le
 \sum_{1 \le i < j \le K}\frac{(Z_i - N_i^\mathrm{TF}) (Z_j - N_j^\mathrm{TF})}{|R_i-R_j|}  \\
 &\le CZ^{1-\delta}\sum_{ j=1}^K  (Z_j - N_j^\mathrm{TF})R_\mathrm{min}^{-1}
  + CZ^{2(1-\delta)} R_\mathrm{min}^{-1}
\end{split}
\end{equation*}
for some small $\delta > 0$.

If $Z_{i_0} - N_{i_0}^\mathrm{TF} \le CZ^{1-\delta}$, we find from Lemma~\ref{lem:TF2} that $Z_i - N_i^\mathrm{TF} \le CZ^{1-\delta}$ for all $i$.
If $Z_{i_0} - N_{i_0}^\mathrm{TF} \ge CZ^{1-\delta}$, then we divide the above inequality by $Z_{i_0} - N_{i_0}^\mathrm{TF}$ and get $Z_{j_0} - N_{j_0}^\mathrm{TF} \le CZ^{1-\delta}$ because of Lemma~\ref{lem:TF2}.
Again, by Lemma~\ref{lem:TF2}, we see that $Z_i - N_i^\mathrm{TF} \le CZ^{1-\delta}$ for all $i = 1, \dots, K$.
Finally, summing this inequality over $i$, we obtain the desired bound on the positive excess charge
\[
Z-N \le \mathrm{const.} Z^{1-\delta}. 
\]
The proof of the theorem is complete.\qed

\noindent{\bf Acknowlegement:} 
The author would like to thank her supervisor Shu Nakamura for the warm encouragements and helpful comments.
She also thanks Heinz Siedentop for many fruitful discussions and  Phan Th\`anh Nam for introducing ~\cite{RuskaiSolovej} and useful discussions.
This work was supported by Research Fellow
of the JSPS KAKENHI Grant Number 18J13709 and the Program for Frontiers of Mathematical Sciences and Physics, FMSP, University of Tokyo, and the JSPS Bilateral Program: {\it Mathematical modeling of quantum devices affected by phonon}  through ``Open Partnership Joint Research Projects/Seminars".

%% Put all acknowledgements (including those concerning grants) at the end.

\end{document}